\newtheorem{cor}{Corollary}[section]
\newtheorem{lemma}{Lemma}[section]
\newtheorem{thm}{Theorem}[section]
\newtheorem{prop}{Proposition}[section]
\newtheorem{definition}{Definition}[section]
\newtheorem{rem}{Remark}[section]
\DeclareMathOperator{\spn}{span}
\newcommand{\Q}{\mathbb{Q}}
\newcommand{\R}{\mathbb{R}}
\newcommand{\C}{\mathbb{C}}
\newcommand{\E}{\mathbb{E}}
\newcommand{\Z}{\mathbb{Z}}
\newcommand{\Span}{\mathrm{span}}
\newcommand{\diag}{\mathrm{diag}}
\newcommand{\diam}{\mathrm{diam}}
\newcommand{\vol}{\mathrm{Vol}}
\newcommand{\SL}{\mathrm{SL}}
\DeclareMathOperator{\ECDP}{ECDP}
\newcommand{\cami}[1]{\color{black}{#1}\color{black}}
\newcommand{\Tao}[1]{\color{black}{#1}\color{black}}
\begin{document}

\title{Well-Rounded Lattices: Towards Optimal Coset Codes for Gaussian and Fading Wiretap Channels}

\author{Mohamed Taoufiq Damir$^*$,\thanks{$^*$The first two authors contributed equally.} \thanks{M.~T. Damir and C.~Hollanti are with the Department of Mathematics and Systems Analysis, 
P.O. Box 11100, FI-00076,  
Aalto University, Finland. E-mails: \{mohamed.damir, camilla.hollanti\}@aalto.fi.} Alex Karrila$^*$\thanks{A.~Karrila is with the Institut des Hautes \'{E}tudes Scientifiques,
35 Route de Chartres, 91440 Bures-sur-Yvette, France. Email: karrila@ihes.fr; alex.karrila@gmail.com. 
}, Laia Amor\'os\thanks{L.~Amor\'os is with the Department of Computer Science, P.O. Box 11100, FI-00076,  
Aalto University, Finland. E-mail: laia.amoros@aalto.fi.}, Oliver Gnilke\thanks{O.~Gnilke is with the Department of Mathematical Sciences, University of Aalborg, Denmark. Email: owg@math.aau.dk.}, David Karpuk\thanks{D.~Karpuk was previously with the Department of Mathematics, Universidad de los Andes, Colombia. Email: davekarpuk@gmail.com.}, and Camilla Hollanti, \emph{Member, IEEE}\thanks{
Preliminary and partial results were presented in \cite{Gnilke, Barreal-Karrila-Karpuk-Hollanti,Gnilke-Barreal,spawc}. The present article is a unification and an extension of these works as well as the unpublished manuscript \cite{wiretap-old}.}
}

\maketitle

\begin{abstract}
The design of lattice coset codes for wiretap channels is considered. Bounds on the eavesdropper's correct decoding probability and information leakage are first revisited. From these bounds, it is  explicit that both the information leakage and error probability are controlled by the average flatness factor of the eavesdropper's lattice, which we further interpret geometrically. 
It is concluded that the minimization of the (average) flatness factor of the eavesdropper's lattice leads to the study of well-rounded lattices, which are shown to be among the optimal in order to achieve these minima. \cami{Constructions of some well-rounded lattices are also provided.}
\end{abstract}

\begin{IEEEkeywords} Coset codes, flatness factor, information theoretic security, lattices, multiple-input multiple-output (MIMO) channels, number fields, physical layer security, Rayleigh fast-fading channels, single-input single-output (SISO) channels, well-rounded lattices, wiretap channels. 
\end{IEEEkeywords}

\section{Introduction}
\label{sec:intro}

\subsection{Background}
In a wiretap channel Alice wishes to transmit information to Bob in the presence of an eavesdropper, Eve. A general objective of code design in a wiretap channel is to maximize the data rate and Bob's correct decoding probability while simultaneously minimizing  information leakage to Eve. It was shown in the seminal paper of Wyner \cite{Wyner} that the legitimate parties can design codes with asymptotically nonzero rate, zero error probability and zero information leakage. Today, this setup is particularly interesting in wireless channels that are open in nature.

As a practical construction of a wiretap code, \cite{Wyner-Ozarow} introduced the general technique of \emph{coset coding}, where random bits are added to the message to confuse the eavesdropper. In the specific case of a wireless channel, lattice codes are commonly used and the code lattice $\Lambda_b$ is endowed with a sublattice $\Lambda_e \subset \Lambda_b$, which carries random bits \cite{Oggier-Sole-Belfiore}.  In this work, we study non-asymptotic design criteria for $\Lambda_e$ to maximize the security of a lattice wiretap code for a fixed dimension. We will tacitly assume throughout that $\Lambda_e$ is a sublattice of a fixed reliable code lattice $\Lambda_b$ with fixed nesting index $[\Lambda_b : \Lambda_e]$.

\subsection{Related Work}
\label{subsec:related}


The security of lattice coset codes is often  measured either by Eve's correct decision probability (ECDP), or alternatively by the mutual information of the message and Eve's received signal. For the \emph{additive white Gaussian noise} (AWGN) channel, upper bounds are known for both approaches \cite{Oggier-Sole-Belfiore, Viterbo-flatness}. More importantly, both are increasing functions of the \emph{flatness factor} of the lattice $\Lambda_e$, yielding its minimization as a design criterion. In \cite{luzzi},   the secrecy criterion  based on the flatness factor in \cite{Viterbo-flatness} is extended to the case of multiple-input multiple-output (MIMO) channels.
Sequences of lattice coset codes achieving security and reliability are also constructed in \cite{luzzi}.

Codes achieving AWGN channel capacity using the flatness factor are proposed in \cite{Ling-Belfiore}. More recently some of the extra conditions needed in \cite{Ling-Belfiore} are removed in \cite{Campello-Dadush-Ling}, establishing a direct link between AWGN-goodness and capacity-achieving codes at the cost of using dithering. Polar lattices achieving AWGN channel capacity for any signal-to-noise ratio (SNR) are given in \cite{polar-lattices}.

For different fading channel models, various alternative design criteria based on error probability and information leakage  bounds were derived for both single-input single-output (SISO) and MIMO channels in \cite{luzzi,Belfiore-Oggier, belfiore_mimo}, and \cite{mirghasemi}. 
\cami{In the pioneering work \cite{Belfiore-Oggier}, the authors derived the so-called \emph{inverse norm sum} (cf. \cite[Sec. III-B]{Belfiore-Oggier}) as a design criterion for the fading wiretap channel. To this end, they assume that eavesdroppers signal quality is relatively high, so that certain terms in the eavesdropper's correct decoding probability can be ignored or simplified. Subsequently, the resulting inverse norm sum was studied in, \emph{e.g.}, \cite{INS_me, INS_roope}. This inverse norm sum form naturally leads to the study of Dedekind zeta functions as the related series is then taken over the field norms. In this paper, we will take a step back and utilize an earlier derivation step (see \cite[eq. (14)]{Belfiore-Oggier}) as our starting point, hence avoiding making any assumption on the eavesdropper's signal quality other than being worse than that of the legitimate receiver's. This results in  an ECDP upper bound that is valid for any signal-to-noise ratio (SNR) and is asymptotically tight at low SNR. With this derivation, the related series is taken over Euclidean norms, therefore naturally leading to a study on the flatness factor and Epstein zeta functions.} 



For the  SISO and MIMO fading wiretap channel the authors in \cite{luzzi}  establish a design criterion for fading and MIMO channels from an error probability perspective aiming at asymptotic and universal goodness. They suggest that the normalized product distance (SISO) and the normalized minimum determinant (MIMO) of the faded lattice and its dual should be maximized simultaneously. Further, they proposed an algebraic construction of lattices achieving strong secrecy and semantic security for all secrecy rates $R < C_b- C_e -c$, where $C_b$ and $C_e$ are Bob's and Eve's channel capacities respectively, and $c$ a constant depending on the lattice.  Unfortunately, for such constructions the value of $c$ turns out to be relatively large, and the explicit construction is problematic (see the discussion in \cite[Sec. VIII]{luzzi} for more details).


To the best of our knowledge, constructive, non-asymptotic design criteria as well as practical and explicit low-dimensional code constructions remain an open problem for the fading SISO and MIMO wiretap channels.

\Tao{
In the present paper we propose the set of well-rounded lattices as a search space for secure lattice codes over wiretap channels. 
Well-rounded lattices appear in various arithmetic and geometric problems. 
For example, a classical theorem due to Voronoi \cite{Voronoi} implies that the local maxima of the sphere packing function are all realized at well-rounded lattices. 
In \cite{mcmullen}, well-rounded lattices have been investigated in the context of Minkowski conjecture. Furthermore, topological properties of the set of well-rounded lattices have also been of interest. For instance, in \cite{ash}, it was proved that the space of all (determinant one) lattices retracts to the space of well-rounded lattices. More recently, a result in \cite{solan2019stable} states that for any lattice $\Lambda$ there exists a determinant one diagonal real matrix $a$ with positive entries such that $a\cdot \Lambda$ is a well-rounded lattice. This result has been used in \cite{TaoLenny} to show that the problem of maximizing the minimum product distance can be restricted to the set of well-rounded lattices without loss of generality.

It is also worth mentioning that most of the lattice-based cryptographic protocols rely on the hardness of the shortest vector problem. On the other hand, the problem of determining all the successive minima of an arbitrary lattice is believed to be strictly harder \cite{micciancio}. However, if the lattice is well-rounded, these two problems become equivalent.

}

\subsection{Contributions}
In this paper, we first motivate a natural and simple lattice design criterion. We take a general channel model with linear fading and Gaussian noise. We recall the strategy used to derive probability bounds for Rayleigh fading SISO and MIMO channels in \cite{Belfiore-Oggier, belfiore_mimo}, and give a slight variant of the information leakage bounds derived in \cite{luzzi_isit16,mirghasemi}. We also derive an information leakage bound in the so-called mod $\Lambda_s$ channel. In particular, we obtain all bounds explicitly as increasing functions of the \textit{average flatness factor} of $\Lambda_e$, and these reduce to the probability bounds of \cite{Belfiore-Oggier} and \cite{belfiore_mimo} in Rayleigh fading channels. The equivalence of the error probability and information leakage bounds
 as presented here hopefully clarifies the situation where several alternative design criteria for each different channel model have been derived with occasionally rather complicated analytic expressions. Some of these results are recovered from \cite{wiretap-old}. 

Having motivated analytic design criterion, we study practical finite-dimensional lattice designs. For AWGN and Rayleigh fast fading channels, we motivate a geometric approach to minimize the (average) flatness factor.
This is done by using some results from the theory of Epstein zeta functions and spherical designs. It is shown that we can restrict the minimization problem to the family of \emph{well-rounded lattices}. This provides a constructive criterion for the search of well-performing lattice coset codes in a fixed dimension (in contrast to asymptotic approaches from the capacity point of view). A collection of simulations are presented to give further credit to well-rounded lattices. 

\cami{While we intend this paper to mainly serve as a proof-of-concept for well-rounded lattices in communications, we also provide some new constructions of well-rounded lattices. To this end, we point out that well-rounded lattices are of measure zero among the space of all lattices of fixed dimension $n$. In fact, a lattice is well-rounded whenever its successive minima are all equal. Consequently, the space of well-rounded lattices is defined by $n-1$ equalities. Thus, this space is of co-dimension $n-1$. Note that this argument implies that the space of well-rounded lattices is not full-dimensional in the space of all lattices and  hence has vanishing (Haar) measure. While well-rounded lattices may seem very rare from a probabilistic point of view, the dimensionality argument (co-dimension $n-1$) shows that the space of well-rounded lattices is indeed large enough to meaningfully construct such lattices;  in \cite{costawell}, infinitely many non-similar well-rounded lattices in every prime dimension $p$  were constructed. In \cite{damir2020bases}, the authors constructed well-rounded lattices in all dimensions and showed the existence of infinitely many non-similar well-rounded lattices in every dimension of the form $n=s^2+2$, where $s$ is a positive odd integer such that $s^2+1$ is square-free. In this paper, we contribute to this study by constructing well-rounded lattices over some real number fields of dimension $\varphi(m)$, where $\varphi$ is the Euler totient function and $m>1$ an odd integer.}

It is worth mentioning that some previous design criteria could also be restricted to the space of well-rounded lattices. For example in \cite{luzzi} the authors proposed the maximization of the normalized product distance of the lattice and its dual in the case of the wiretap fading channel; and the maximization of the packing density of the lattice and its dual for the Gaussian wiretap channel. 
Indeed, it is well known that the well-roundedness property is a necessary condition for a lattice to achieve a local maximum for the packing density function. \cami{Namely, the local maxima are achieved by extreme lattices, and by a classical resulf of Voronoi a lattice is known to be extreme if and only if it is perfect and eutactic \cite{Martinet_perfect,Voronoi}.  It is easy to see that perfect lattices are well-rounded \cite[Prop 3.3]{Schurmann}.} Furthermore, it was proved in \cite{TaoLenny} that the maximization of the normalized product distance can be restricted to the space of well-rounded lattices without loss of generality. This provides nice coherence to the asymptotic ($n\rightarrow \infty$) results in \cite{luzzi} and the results concerning any fixed dimension in this paper. 


Compared to the earlier conference publications \cite{Gnilke, Barreal-Karrila-Karpuk-Hollanti,Gnilke-Barreal,spawc} preceding this paper, we provide more rigorous justifications for the choice of well-rounded lattices \cami{as well as a new construction method} (\emph{e.g.}, in Section \ref{sec: WR}). Moreover, we hope to unify all the previous and new results here in a comprehensive way.

\subsection{Organization}

This paper is organized as follows. In Section \ref{sec: preli} we give the mathematical preliminaries.
Section~\ref{sec: system} introduces the channel models, lattice coset codes, and the detailed setups for the error probability and information leakage bounds.
In Section~\ref{sec: AWGN} we give the eavesdropper's information leakage and correct decoding probability bounds in the AWGN channel, yielding the average flatness factor analytic design criterion.
In Section \ref{sec: fading channel} we generalize the previous bounds to an arbitrary fading channel.
Section \ref{sec: WR} shows the importance of well-rounded lattices regarding the behaviour of the flatness factor \cami{and presents a new construction.}
In Section \ref{sec: sim} we show some simulations of a wiretap setting with the aim to back-up the design criterion deduced earlier. We first compare constructions from well-rounded and from non-well-rounded lattices. Then we proceed to compare different well-rounded lattices, showing how the choice among them is not necessarily straightforward and warrants interesting further research.

\section{Mathematical preliminaries}
\label{sec: preli}

\subsection{Information-theoretic definitions}
\label{subsec: Information-theoretic definitions}

We consider a message as a random variable $\mathbf{m}$ taken from a finite message set $\mathcal{M}$ according to some distribution, and denote by $\mathbf{y}$ the random variable representing the output of the channel at the eavesdropper's end. The entropy $H [\mathbf{m}]$, conditional entropy $H [\mathbf{m} | \mathbf{y}]$ and mutual information $I [\mathbf{m} ; \mathbf{y}]$ are defined as usual, see \textit{e.g.} \cite{Shamai}. We recall the trivial bounds $0 \le I [\mathbf{m} ; \mathbf{y}] \le H [\mathbf{m}] \le \log | \mathcal{M} |$.  In this paper we are interested in minimizing the mutual information $I [ \mathbf{m}; (\mathbf{y}, \mathbf{h}) ] $ where $\mathbf{h}$ is the channel state, known by the receiver. The random variables $\mathbf{m}$ and $\mathbf{h}$ are assumed independent, and $\mathbf{y} $ depends on $\mathbf{m}$, $\mathbf{h}$ and additionally a random variable describing noise. One has
\begin{equation}
\label{eq: information with double condition}
I [ \mathbf{m}; (\mathbf{y}, \mathbf{h}) ] = \mathbb{E}_{\mathbf{h}} \{ I [ \mathbf{m} ; \{ \mathbf{y} | \mathbf{h} \} ] \},
\end{equation}
In other words, the quantity of interest is obtained as the expectation of the mutual information over the different channel states. This will be our strategy to compute information leakage bounds.  Throughout this paper, in order to streamline the presentation, we will often use the same letter to denote a random variable and a realization of that same random variable, with the meaning always clear from the context.

\subsection{Lattices}

\subsubsection{Basic concepts}
     \cami{ Throughout the paper, a \emph{lattice} $\Lambda$ is a discrete Abelian subgroup of a real vector space. Let $\{\mathbf{b}_1,\ldots, \mathbf{b}_k\}$ be linearly independent in $\R^n$. Then we can write a lattice in terms of its basis $\{\mathbf{b}_i\}$ as $\Lambda=\{\sum_{i=1}^{k} z_i \mathbf{b}_i\,:\, z_i\in\mathbb{Z},\, k\leq n\}\subset\R^n$. Here, $k$ and $n$  are referred to as the \emph{rank} and \emph{dimension} of the lattice, respectively, and the lattice is said to be \emph{full (rank)} if $k=n$. A lattice generator matrix  $M_\Lambda =(\mathbf{b}_i)_{1\leq i\leq k}\subseteq \R^{n\times k}$ consists of the lattice basis vectors as columns. The \emph{volume} of the lattice is defined as the volume of the fundamental parallelotope, and can be computed as $\vol(\Lambda)=\sqrt{\det(M_\Lambda^tM_\Lambda)}$, where $t$ denotes the transpose.} The dual lattice is denoted by $\Lambda^*$, which by definition has generator matrix $(M_\Lambda^{t})^{-1}$.  The Voronoi cell of $\Lambda$ centered at $\mathbf{0}$ is defined by $\mathcal{V}(\Lambda) =\{\mathbf{x}\in\R^n\,:\, \forall \mathbf{t}\in\Lambda\,\, ||\mathbf{x}||<||\mathbf{x}-\mathbf{t}||\} $,  where $||\mathbf{x} ||=\sqrt{\sum_{i=1}^n x_i^2}$ is the euclidean norm of $\mathbf{x}$. A lattice $\Lambda$ has \textit{full diversity} if for all nonzero $\mathbf{t} \in \Lambda$, all the components $t_i$ are nonzero. Unless stated otherwise, all lattices we consider will be full.

    A vector $\mathbf{x} \ne \mathbf{0}$ of a lattice $\Lambda$ is a \textit{minimal (length) vector} if it is of minimal length among all nonzero lattice vectors. The \textit{minimal norm}  of $\Lambda$ is then $ \lambda_1(\Lambda)=\min_{0\neq \mathbf{x}\in\Lambda}||\mathbf{x}||$. 
    The problem of finding the lattice with maximal $\lambda_1(\Lambda)$ among all lattices of unit volume is \textit{the (lattice) sphere packing problem}. The best sphere packings are known in low dimensions \cite{Sloane, Blichfield, Kumar}, but in general the sphere packing problem is hard.

\subsubsection{Well-rounded lattices}
    A special class of lattices that will be of our interest are the so-called well-rounded lattices.

\begin{definition}
	Let $\Lambda \subset \R^n$ be a lattice and let $S(\Lambda):=\{ \boldsymbol{t} \in \Lambda \; :\; || \boldsymbol{t}|| =\lambda_1(\Lambda) \}$ be the set of shortest vectors. Then we say that $\Lambda$ is \emph{well-rounded (WR)} if $\spn_\R(\Lambda)=\spn_\R(S(\Lambda))$.
\end{definition}

In other words, the set of minimal vectors spans a vector space that has the same dimension as the span of the whole lattice, and thus the vector spaces coincide. The set of minimal length vectors $S(\Lambda)$ does not necessarily form a basis for $\Lambda$ {\cite[Chapter 2]{PhongNguyen}}. They are known to form a basis for all $n \le 4$ as mentioned in \cite{Martinetbasis}. An equivalent definition uses the successive minima of a lattice to define well-roundedness.

\begin{definition}
	Let $\Lambda \subset \R^n$ be a lattice of rank $k\leq n$, and let $\lambda_i = \lambda_{i}(\Lambda) := \inf \left\{\left. r \right| \dim(\Span(\Lambda \cap \mathcal{B}_r))\geq i \right\} $ be the successive minima of $\Lambda$, where $\mathcal{B}_r$ is the closed ball of radius $r$ around the origin. Then $\Lambda$ is called \emph{well-rounded} if $\lambda_1 = \lambda_2= \cdots = \lambda_k$.
\end{definition}

A subclass of well-rounded lattices is given by the so called generic well-rounded lattices.

\begin{definition}
	Let $\Lambda \subset \R^n$ be a lattice of rank $k\leq n$. We say that $\Lambda$ is \emph{generic well-rounded} if it is well-rounded \cami{and has exactly $\kappa(\Lambda)=|S(\Lambda)|=2k$ shortest vectors, \emph{i.e.}, its \emph{kissing number} $\kappa$ coincides with  $\kappa(\mathbb{Z}^k$).}
\end{definition}

\begin{rem}\label{rem:kissing-discussion} \cami{While the densest lattice packing is solely determined by the shortest vector, for the theta series and consequently flatness factor minimization problem the kissing number also plays a role. Ideally, we would like to simultaneously maximize the shortest vector length \textbf{and} minimize the number of them. This seems to be hard problem; for instance, the lattice $\mathbb{Z}^n$ is well-known to be the worst among WR lattices in terms of the packing density ($\lambda_1=1$). On the other hand, it has the smallest possible kissing number $2n$ among WR lattices. More generally, the densest WR lattices provide the optimal solution in terms of the shortest vector, while the generic WR lattices provide the lowest kissing numbers among WR lattices. However, no obvious tradeoff between the two extremes is known.  }
\end{rem}

\begin{definition}
 \cami{   A lattice is called \emph{perfect}, if it is completely determined by the set $S(\Lambda)$, \emph{i.e.}, there is only one positive definite quadratic form taking value $1$ at all points of $S(\Lambda)$.}
\end{definition}
\cami{It is easy to see that perfect lattices are WR \cite[Prop. 3.3]{Schurmann}. As mentioned before, extreme lattices give the local minima of the sphere packing density, and a lattice is extreme if and only if it is perfect and eutactic \cite{Voronoi}, hence also WR. 
}

Well-rounded lattices have been previously proposed \cite{Gnilke,Barreal-Karrila-Karpuk-Hollanti,Gnilke-Barreal} as good lattices for coset codes (introduced later), based on various performance measures (ECDP, flatness factor, mutual information)  \cite{Viterbo-flatness,luzzi_isit16,luzzi,Belfiore-Oggier,mirghasemi}, and have been shown to outperform non-well-rounded ones.
It is possible to find all WR sublattices of a given lattice and a given index by searching through all possible combinations of vectors of suitable length. For instance, the WR non-orthogonal lattices used in our simulations in Section \ref{sec: sim} were found after a few minutes of randomly testing combinations of integer vectors.
Nevertheless, theoretical results to construct a variety of good (generic) well-rounded lattices is an open problem that has become of great interest \cite{costawell,damir2020bases}. 

\subsubsection{Gaussian sums, theta function, and flatness factor}

We denote the $n$-dimensional Gaussian zero-mean probability density function (PDF) with variance $\sigma^2$ by
\begin{equation}
g_n (\mathbf{t}; \sigma) = \frac{1}{(\sqrt{2 \pi} \sigma)^n} \exp \left(- \frac{||\mathbf{t}||^2}{2 \sigma^2} \right), \quad \mathbf{t}\in\Lambda.
\end{equation}
and its (possibly shifted) lattice sums by
\begin{equation}
g_n (\Lambda + \mathbf{x} ; \sigma) := \sum_{\boldsymbol{\lambda} \in \Lambda} g_n (\boldsymbol{\lambda} + \mathbf{x} ; \sigma),\quad \mathbf{x}\in\R^n.
\end{equation}

\cami{One can associate to any (positive-definite) lattice  a \emph{theta function} given by
$$\Theta_{\Lambda} (q) = \sum_{\boldsymbol{\lambda} \in \Lambda } q^{|| \boldsymbol{\lambda} ||^2},$$ where $q=e^{i\pi \tau}$ and $\mathrm {Im} \,\tau >0$. The theta function of a lattice is then a holomorphic function on the upper half-plane. In this paper, we only consider real theta functions and typically choose $q=e^{-1/2 \sigma^2}$, since $g_n (\Lambda; \sigma) = \frac{1}{(\sqrt{2 \pi} \sigma)^n} \Theta_{\Lambda} (e^{-1/2 \sigma^2})$.



Then, we can also write it as the generating function
\begin{equation}
\label{eq: theta function definition}
\Theta_{\Lambda} (q) = \sum_{\boldsymbol{\lambda} \in \Lambda } q^{|| \boldsymbol{\lambda} ||^2} = 1 + \# \{ \boldsymbol{\lambda} \in \Lambda \ : \  || \boldsymbol{\lambda} ||^2 = \lambda_1(\Lambda)^2 \} q^{ \lambda_1(\Lambda)^2 } + \ldots
\end{equation}
where $| q | < 1$, and the series converges absolutely. 
}

It is easy to see that $g_n (\Lambda + \mathbf{x} ; \sigma) $ is $\Lambda$-periodic as a function of $\mathbf{x}$ and it defines a PDF on $\mathcal{V}(\Lambda)$, called the \textit{lattice Gaussian} PDF. The deviation of the lattice Gaussian PDF from the uniform distribution on $\mathcal{V}(\Lambda)$ is characterized by the \textit{flatness factor} $\varepsilon_{\Lambda} (\sigma)$, which we define for full lattices by
\begin{equation}
\label{eq: flatness factor definition}
\varepsilon_{\Lambda} (\sigma) : = \max_{\mathbf{u} \in \R^n} \left| \frac{g_{n}(\Lambda + \mathbf{u}; \sigma)}{1/\vol(\Lambda)} - 1 \right|,
\end{equation}
where we can maximize over $\R^n$ by periodicity. 
\cami{
By \emph{average flatness factor} we mean 
\begin{equation}
\label{eq:averageFF}
\E [ \varepsilon_{ \mathbf{h}  \Lambda_e}(\sigma) ],
\end{equation}
where the expectation is taken over different fading realizations $\mathbf{h}$.
}

The flatness factor was introduced as a wiretap information theory tool in \cite{Viterbo-flatness}. The Poisson summation formula yields the following useful equalities:
\begin{equation}
\label{eq: dual theta flatness factor formula}
\varepsilon_{\Lambda} (\sigma) 
=  \vol(\Lambda) g_{n} (\Lambda; \sigma) - 1 
= \frac{\vol(\Lambda)}{(\sqrt{2 \pi} \sigma)^n} \Theta_{\Lambda} (e^{-1/2 \sigma^2}) - 1 
= \Theta_{\Lambda^*} (e^{- 2 \pi \sigma^2}) - 1.
\end{equation}
From the last expression it is clear that the flatness factor is strictly decreasing in $\sigma$ and tends to zero as $\sigma \to \infty$. It also implies the scaling property $\varepsilon_{a \Lambda} (a \sigma) = \varepsilon_{\Lambda} (\sigma)$. We remark that if a non-full lattice $\Lambda$ is generated by $M_\Lambda \in \R^{n \times m}$, then $\varepsilon_{\Lambda} = \varepsilon_{\sqrt{M_\Lambda^t M_\Lambda}}$, where the matrix square root $Q = \sqrt{M_\Lambda^t M_\Lambda} \in \R^{m \times m}$ satisfies $Q^t Q = M_\Lambda^t M_\Lambda$.

We define the \textit{variational distance} $V(\rho,q)$ of two PDFs $\rho$ and $q$ as
\begin{equation}
\label{var}
V(\rho,q) = \int_{\R^n} | \rho (\mathbf{y}) - q(\mathbf{y}) |\ d \mathbf{y}.
\end{equation}
It is clear that the flatness factor bounds the variational distance of the lattice Gaussian distribution on $\mathcal{V} (\Lambda)$ and the uniform distribution on $\mathcal{V} (\Lambda)$,
\begin{equation}
V(g_n (\Lambda + \mathbf{y} ; \sigma)|_{\mathcal{V}(\Lambda)} 1/ \vol(\mathcal{V}(\Lambda))|_{\mathcal{V}(\Lambda)}) \le \varepsilon_{\Lambda} (\sigma) .
\end{equation}
The connection between the flatness factor and information leakage estimates is now illustrated by the following lemma that is crucial both in the estimates of \cite{Viterbo-flatness} and in this paper.

\begin{lemma}
\label{lemma: ff and information} \cite[Lemma 2]{Viterbo-flatness}
Let $\mathbf{y}$ be an $\R^n$-valued random variable, and let $\mathbf{m}$ have any distribution on a message set $\mathcal{M}$ such that $| \mathcal{M} | \ge 4$. Denote by $\rho_{\mathbf{y}| \mathbf{m}}$ the PDF of $\mathbf{y}$ given a message realization. Suppose that there exists some PDF $q$ on $\R^n$ such that, for all message realizations, we have
$V(\rho_{\mathbf{y}| \mathbf{m}},q) \le \varepsilon \le 1/2 $. Then we have
\[
I[\mathbf{m}; \mathbf{y}] \le 2 \varepsilon \log | \mathcal{M} | - 2 \varepsilon \log(2 \varepsilon ) 
\]
\end{lemma}

We remark that the assumption $\varepsilon \le 1/2$ is implicit in \cite{Viterbo-flatness}, where the authors are interested in sequences of codes where $I[\mathbf{m}; \mathbf{y}] \to 0$. It is however necessary, as seen by taking $\varepsilon \to \infty$. We note that the above bound achieves the trivial bound $I[\mathbf{m}; \mathbf{y}] \le \log | \mathcal{M} |$ at $\varepsilon = 1/2$.


\section{System model}\label{sec: system}

\subsection{Channel models}

\subsubsection{General wireless channel with fading and noise}\label{general}

We consider in this paper several models representing a \textit{wireless channel model with (linear) fading and (additive white Gaussian) noise}. That is, if Alice transmits a vector $\mathbf{x} \in \R^n$, the receiver observes a vector
\begin{equation*}
    \mathbf{y} = \mathbf{h} \mathbf{x}+\mathbf{n},
\end{equation*}
where $\mathbf{h} \in \R^{m \times n }$,  $m \geq n$, is the channel fading matrix, and $\mathbf{n} \in \R^m$ the noise vector, composed of i.i.d. components $n_i \sim \mathcal{N}(0,\sigma^2)$. In all our considerations, the random variables $\mathbf{h}$, $\mathbf{x}$ and $\mathbf{n}$ are assumed independent, and $\mathbf{h}^t \mathbf{h} \in \R^{n \times n}$ is assumed to be full rank with high probability.

Both receivers, Bob and Eve, are assumed to have perfect channel state information (CSIR), \emph{i.e.}, know the realized value of $\mathbf{h}$. The transmitter is only assumed to know the channel statistics. The theoretical analysis in this paper only considers the channel between Alice and Eve, and we consequently forgo subscripts specifying the receiver. The assumptions on Bob's channel come into play through the choice of typical code lattices $\Lambda_b$ in the examples.

The information leakage bound derivations in this paper concern the general wireless channel model with linear fading and additive white Gaussian noise. 

\subsubsection{Important special cases}
\label{subsec: MIMO model}
\label{subsec: SISO model}

The general wireless channel model with fading and noise does not fix the distribution of the fading matrix $\mathbf{h}$. Different choices of this distribution will yield several classical channel models, some of which we exemplify here. We refer the reader to \cite{Oggier-Sole-Belfiore, belfiore_mimo, Viterbo} for background on these models.

The \textit{additive white Gaussian noise (AWGN) channel} is obtained when $\mathbf{h}$ is deterministically the identity matrix $I_n \in \R^{n \times n}$. AWGN wiretap channels have been studied in, \emph{e.g.}, \cite{Oggier-Sole-Belfiore}.

The \textit{Rayleigh fast fading SISO channel} is obtained by choosing $\mathbf{h}=\diag(h_1,\ldots,h_n) \in \R^{n \times n}$ a square diagonal matrix, where the diagonal entries are i.i.d. Rayleigh distributed with parameter $\sigma_{h}$. We recall that this distribution is characterized by the PDF
\begin{equation}
\label{PDF}
r(h) = \frac{h}{\sigma_{h}} \exp \left( - \frac{h^2}{2 \sigma_{h}^2} \right)
\end{equation}
on the positive real line $h \in \R_{ \geq 0}$. The \textit{Rayleigh block fading SISO channel} is similar to the fast fading case, except that the diagonal entries are not all independent. Instead, $n$ is assumed divisible by some integer $L$, $n=m \times L$, and the diagonal entries of $\mathbf{h}$ then consist of $m$ i.i.d. Rayleigh random variables, each repeated $L$ times. 
Rayleigh block and fast fading wiretap channels have been studied in \cite{Belfiore-Oggier,Oggier-Sole-Belfiore}, with a practical USRP implementation in \cite{Oggier_slow}.

The \textit{quasi-static Rayleigh fading MIMO channel} is obtained when $\mathbf{h} \in \R^{m \times n }$, with $m$ and $n$ both even, is the real matrix corresponding to a twice smaller complex matrix $\tilde{\mathbf{h}} \in \C^{m/2 \times n/2 }$ whose entries are i.i.d. complex Gaussian, $\tilde{h}_{i,j} \sim \mathcal{N}_{\C} (0, \sigma_h^2)$. That is, the real and imaginary parts $a_{i,j}$ and $b_{i,j}$ of $\tilde{h}_{i,j}$ are independent $\mathcal{N} (0, \sigma_h^2/2)$ real Gaussians, and $\mathbf{h}$ consists of $2 \times 2$ blocks
\begin{equation*}
\small{
\left(
\begin{array}{cc}
a & -b \\
b & a
\end{array}
\right).
}
\end{equation*}
Lattice code design criteria for MIMO wiretap channels have been considered in \cite{belfiore_mimo,luzzi,mirghasemi}.




\subsection{Coset coding}
\label{subsec:coset coding}

\subsubsection{Basic concepts}

Coset coding was first proposed by Ozarow and Wyner in \cite{Wyner-Ozarow} and adapted to the lattice coding setting in \cite{Oggier-Sole-Belfiore}. In lattice coset coding, Alice and Bob use nested lattices $\Lambda_e \subset \Lambda_b \subset \R^n$. For a message space $\mathcal{M}$ with $|\mathcal{M}| = [\Lambda_b : \Lambda_e]$, we have a fixed injective map $\mathcal{M} \to \Lambda_b \cap \mathcal{V} (\Lambda_e)$. To transmit a message $\mathbf{m}$, Alice first maps it by this injection to $\boldsymbol{\lambda}_{\mathbf{m}} $. Then, she transmits a vector $\mathbf{x}$ which is a random representative of the $[\Lambda_b : \Lambda_e]$ equivalence class of $\boldsymbol{\lambda}_{\mathbf{m}}$. Alice's transmitted vector can thus be written as
\begin{equation}
\label{representant}
\boldsymbol{\lambda}_{\mathbf{m}} + \boldsymbol{\lambda} \in \boldsymbol{\lambda}_{\mathbf{m}} + \Lambda_e \in \Lambda_b / \Lambda_e,
\end{equation}
where $\boldsymbol{\lambda} \in \Lambda_e$ is a random vector encoding the choice of the representative inside the $[\Lambda_b : \Lambda_e]$ equivalence class. Different distributions of $\boldsymbol{\lambda}$ yield different variants of coset coding. Note that $\boldsymbol{\lambda}$ need not be independent of $\mathbf{m}$\footnote{The vectors $\boldsymbol{\lambda}$ may also encode and hence be determined by public messages $\mathbf{m}_{pub}$. In this case one would typically assume $\boldsymbol{\lambda}$ independent of Bob's private message $\mathbf{m}$ and uniform over some subset of $\Lambda_e$.}.

The rate is $R = \frac{1}{n}\log_2 |\mathcal{M}|$ bits per real channel use.  The rate can be divided into two parts, the one related to the message and the other to the random bits:
\begin{equation}
R = R_m + R_r.
\end{equation}
Near-complete secrecy is then achieved even if Eve receives information with a nonzero rate, approximately equal to $R_r$. Conceptually, Eve can decode the ``coarse'' lattice $\Lambda_e$, which contains only random bits, but not the ``fine'' lattice, which contains the actual information.

If $\mathbf{x} = M_\Lambda\mathbf{z}$ for $\mathbf{z} \in \Z^n$, then $\mathbf{h} \mathbf{x} = \mathbf{h} M_\Lambda\mathbf{z}$, and we can think of a lattice code under fading with CSIR as a Gaussian-channel lattice code where the code lattice realizes a random lattice with generator matrix $\mathbf{h} M_\Lambda$. We will henceforth denote the faded lattices $\Lambda_b$ and $\Lambda_e$ by $\Lambda_{b,\mathbf{h}}$ and $\Lambda_{e,\mathbf{h}}$, respectively.

\subsubsection{Setups and bounds}
\label{subsubsec: shaping lattices}

The nature of the boundary of the transmission region in a lattice coset code is of utmost importance in deriving probability and mutual information estimates.  We consider three different estimates, arising by neglecting the boundary, removing it by a modulo operation of a shaping lattice $\Lambda_s$, and smoothing the boundary, respectively. The estimates are for:

\begin{itemize}
\item[] \textbf{Case 1:} The eavesdropper's correct decoding probability (ECDP), assuming that she decodes to the closest point of $\Lambda_b$. The same ECDP estimate holds for closest-point decoding in the mod $\Lambda_s$ channel discussed below.
\item[] \textbf{Case 2:} The eavesdropper's information leakage, assuming that she has the mod $\Lambda_s$ channel\footnote{\cami{The mod $\Lambda_s$ channel with the shaping lattice is artificial; Eve only receives information about the equivalence class of the received vector modulo the shaping lattice $\Lambda_s$. In the computations, this merely plays the technical role of removing boundary effects when Alice uses the uniform representative strategy in $\Lambda_s$, so that we can derive an explicit information bound.}} and Alice chooses uniform random representatives of the coset classes.
\item[] \textbf{Case 3:} The eavesdropper's information assuming that Alice uses Gaussian coset coding, also discussed below.
\end{itemize}


\subsubsection{The mod $\Lambda_s$ channel and uniform random representatives}

The following shaping lattice approach is identical to \cite{Viterbo-flatness}, called the mod $\Lambda_s$ channel: take three nested lattices $\Lambda_s \subset \Lambda_e \subset \Lambda_b \subset \R^n$ called shaping, coset, and code lattice, respectively. Then, the random part $\boldsymbol{\lambda}$ described in the general coset coding strategy has a uniform distribution on the $[\Lambda_e : \Lambda_s]$ representatives of $\Lambda_e / \Lambda_s$ in $\mathcal{V} (\Lambda_s )$. This is called the uniform representative strategy.  The physical message received by Eve is $\mathbf{y}$ as in the channel equation, but in the mod $\Lambda_s$ channel, Eve only receives knowledge of the equivalence class $\mathbf{y} / \Lambda_{s, \mathbf{h}}$. 

\subsubsection{Discrete Gaussian coset coding}

In the discrete Gaussian coding, the boundary effects of the transmission region are handled by smoothing the boundary. Fixing a message $\mathbf{m}\in\mathcal{M}$, the random part $\boldsymbol{\lambda}$ of the message $\mathbf{m} = \boldsymbol{\lambda}_\mathbf{m} + \boldsymbol{\lambda}$ is chosen so that the transmitted vector $\mathbf{x}\in \boldsymbol{\lambda}_\mathbf{m} + \Lambda_e$ has the centered discrete Gaussian distribution on the shifted lattice $\boldsymbol{\lambda}_\mathbf{m}+\Lambda_e$
\begin{equation}
\label{eq: discrete Gaussian probabilities}
P(\mathbf{x} ) = g_n (\mathbf{x}; \sigma_s) / g_n (\Lambda_e + \boldsymbol{\lambda}_\mathbf{m}; \sigma_s) =: D_{\Lambda_e, \boldsymbol{\lambda}_\mathbf{m}} (\mathbf{x}; \sigma_s) ,
\end{equation}
for all $\mathbf{x} \in \boldsymbol{\lambda}_\mathbf{m} + \Lambda_e $. Here the \emph{shaping variance} $\sigma_s^2$ should be taken large enough compared to $\Lambda_e$; see \cite{Viterbo-flatness}. 

\section{The AWGN channel}
\label{sec: AWGN}

In this section we briefly recall the eavesdropper's information leakage and correct decoding probability bounds in the AWGN channel, that is, when $\mathbf{h} = I_n$.  These criteria were first derived in \cite{Oggier-Sole-Belfiore, Viterbo-flatness}.  This section serves to align the analytic design criteria with our geometric intuition, and serves as a warm-up for the techniques used to analyze the fading channel.

\subsection{Case 1: Closest point decoding}

Let us study the upper bound for Eve's correct-decoding probability $P_{c, e; \Lambda_e, \Lambda_b}$, assuming that she performs a closest-point decoding on the infinite lattice $\Lambda_b$. Let $\Lambda_b, \Lambda_e \subset \R^n$ both have rank $m$. We have
\begin{equation}\label{Pce1}
P_{c, e; \Lambda_e, \Lambda_b} (\sigma)\leq
 \vol (\Lambda_b) g(\Lambda_e ; \sigma) = [\Lambda_b : \Lambda_e]^{-1} (\varepsilon_{\Lambda_e}(\sigma) + 1).
\end{equation}
This bound was first derived in \cite{Oggier-Sole-Belfiore}, and for non-full lattices in \cite{belfiore_mimo}. With the properties of the flatness factor, this implies that the probability bound is decreasing with $\sigma$ and, very intuitively, at poor signal quality $\sigma \to \infty$, the ECDP tends to the inverse codebook size $[\Lambda_b : \Lambda_e]^{-1}$, i.e  the ECDP with a ``uniform random guess''. 

As pointed out in \cite{Oggier-Sole-Belfiore}, due to the dominant term of the theta series being controlled by the sphere packing radius of $\Lambda_e$, minimizing the flatness factor can be coarsely approximated by the sphere packing problem. More rigorously but less generally, \cite{Karrila-Hollanti} proved that orthogonal lattices are always suboptimal in minimizing the theta series of a lattice. \cami{More precisely, they showed that skewing an orthogonal lattice will always reduce both the decoding error probablity for Bob and the ECDP.} Recently, \cite{Amaro_approx} provided an estimate of the flatness factor of a lattice in terms of the sphere packing radius.

We illustrate the accuracy of the above intuition with simulation results in Fig. \ref{fig:A8b}.  We have computed the ECDP bounds \eqref{Pce1} for four coset codes based on 8-dimensional lattices. The first three have $\Lambda_b=\frac{1}{2}\Z^8$, and $\Lambda_e$ has been chosen to be $\Z^8$, $L:= 2\Z \times \frac{1}{2} \Z \times \Z^6$, and the Gosset lattice $E_8$, respectively. For the fourth code, $\Lambda_e$ is the unit-volume scaling of the root lattice $A_8^*$ and $\Lambda_b = \frac{1}{2} \Lambda_e$. All these give a message set size of $[\Lambda_b : \Lambda_e] = 2^8$. Formulas for the theta functions are given in \cite{Sloane} and for $\Theta_{A_8^*}$ in \cite[Remark 2]{Chua}.  The left-hand plot in Fig. \ref{fig:A8b} shows the probability estimates obtained from the theta series as a function of the signal-to-noise ratio SNR$=10 \log_{10}(\sigma^{-2})$, and the lattices are indeed ordered by their sphere packing density. For the right-hand plot, we use $24$-dimensional lattices with $\Lambda_e = 2 \Lambda_b$, and have taken $\Lambda_e$ to be $\Z^{24}$, the Leech lattice $\Lambda_{24}$, $E_8^3$, and $E_6^4$. Again, the lattices appear in order of their sphere packing density (the minimal norms are given, \textit{e.g.}, in \cite{Sloane}). 

\begin{figure}[ht]
\begin{center}
\includegraphics[width =  0.45\textwidth]{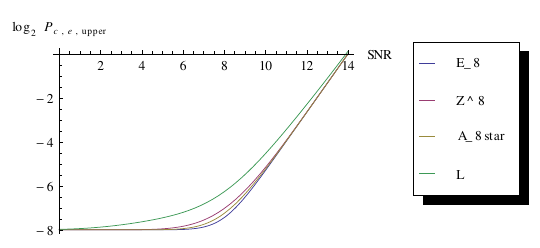}
\includegraphics[width =  0.45\textwidth]{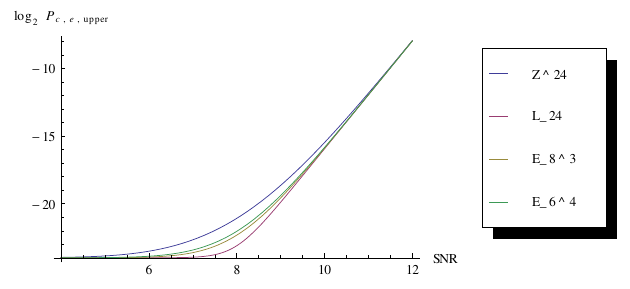}
\caption{The ECDP upper bounds \eqref{Pce1} as a function of SNR$=10 \log_{10}(\sigma^{-2})$ for certain lattices in 8 and 24 dimensions.}
\label{fig:A8b}
\end{center}
\end{figure}

\subsection{Cases 2 and 3: mod $\Lambda_s$ channel and Gaussian coset coding}
\label{subsec: AWGN setup 2}

In Case 2, we have the information leakage bound for uniform coset representatives and a mod $\Lambda_s$ channel,
\begin{equation}
\label{eq: AWGN mod lattice information bound}
I(\mathbf{m}; \mathbf{y}/\Lambda_s) \le 2 \varepsilon \log(| \mathcal{M} |) - 2 \varepsilon \log(2 \varepsilon ),
\end{equation}
where we denote and assume $\varepsilon = \varepsilon_{\Lambda_e} (\sigma) \le 1 / 2$.
This bound was proven in \cite{Viterbo-flatness}. In Case 3, with discrete Gaussian coset coding, we have the information leakage bound \cite{Viterbo-flatness}
\begin{equation}
\label{eq: AWGN discrete Gaussian information bound}
I(\mathbf{m}; \mathbf{y}) \le 8 \varepsilon \log(| \mathcal{M} |) - 8 \varepsilon \log(8 \varepsilon ),
\end{equation}
denoting and assuming $\varepsilon := \varepsilon_{\Lambda_e} (\frac{\sigma \sigma_s}{\sqrt{\sigma^2 + \sigma_s^2}}) \le 1 / 8 $. 
The information leakage bounds for fading channels derived in this paper reduce to these bounds.

\section{The Fading Channel}\label{sec: fading channel}

In this section, we generalize our channel model to an arbitrary linear fading channel model. Our aim is to motivate our choice of the various previously suggested criteria, and to provide clarity to the earlier results. The derivations are simple, hold in the general fading channel model \cami{(cf. Sec. \ref{general})}, and are valid for any real lattice coset code.  Design criteria in Rayleigh fading MIMO and SISO channels based on bounds for what the authors call \textit{average correct decision probability} and \textit{average information leakage} were derived in \cite{Belfiore-Oggier, belfiore_mimo} and \cite{mirghasemi}, with several alternative approximations. 

In Case 1, we review the derivation of the probability bounds for the general fading channel, discussing how it characterizes Eve's performance.
In Case 2, we derive a new information leakage bound for the $\bmod \ \Lambda_s$ channel. In Case 3, the average leakage and an information leakage bound from an achievability proof are known from \cite{mirghasemi,luzzi_isit16}. We derive an analytic expression for an information leakage bound in a finite-dimensional code, roughly a cross-breed of the two earlier ones. In particular, all our bounds agree on the design criterion of the average flatness factor, which we hence take as an objective function.


\subsection{Case 1}\label{subsubsec: Rayleigh setup 1}

Let $\Lambda_b, \Lambda_e \subset \R^n$ be two lattices of rank $m$, and assume that Eve simply decodes the received signal to the closest lattice point in $\Lambda_{b, \mathbf{h}}$. The probability $P_{c,e}$ of Eve correctly decoding the message is upper bounded \cite{Belfiore-Oggier, belfiore_mimo} by 
\begin{align}
%
P_{c,e}(\sigma) &\le\mathbb{E}_{\mathbf{h}}\left[ \vol (\Lambda_{b, \mathbf{h}} ) g_m(\Lambda_{e, \mathbf{h}} ; \sigma) \right]
\label{eqn:p_eve}
=\left[\Lambda_b : \Lambda_e\right]^{-1}\left(\mathbb{E}_{\mathbf{h}}\left[\varepsilon_{\Lambda_{e, \mathbf{h}}}(\sigma) \right] + 1 \right).
\end{align}
The expectation \eqref{eqn:p_eve} can be computed explicitly for several channel models. The result is of the form 
\begin{equation}
\mathbb{E}_{\mathbf{h}}\left[\varepsilon_{\Lambda_{e, \mathbf{h}}}(\sigma) \right] + 1  = \mathbb{E}_{\mathbf{h}} [ \vol (\Lambda_{b, \mathbf{h}}) g(\Lambda_{e, \mathbf{h}} ; \sigma) ] = \vol (\Lambda_b) \psi_{\Lambda_e} \left( \frac{\sigma_\mathbf{h}}{\sigma } \right)
\end{equation}
where in the last step we assume that the distribution of the fading matrix $\mathbf{h}$ is some of those explicated in Sec.~\ref{subsec: SISO model}, and $\sigma_h$ is the associated parameter.
The Rayleigh fast fading model $\psi_{\Lambda_e}$ is given by \cite{Belfiore-Oggier}:
\begin{equation}
\label{eq: def of psi series}
\psi_{\Lambda_e}\left(\frac{\sigma_\mathbf{h}}{\sigma}\right)= \psi_{\Lambda_e}^{\text{FF}} \left( \frac{\sigma_\mathbf{h}}{\sigma } \right) = \left( \frac{\sigma_\mathbf{h}}{2 \sigma } \right)^n  \sum_{\mathbf{t} \in \Lambda_e } \prod_{i = 1}^n \frac{1}{(1 + | t_i |^2 \frac{\sigma_\mathbf{h}^2}{ \sigma^2 } )^{3/2}}.
\end{equation}
\cami{The series in the above equation converges, see the Appendix for a proof. Note that (by the monotone convergence theorem) the convergence of \eqref{eq: def of psi series} ensures the finiteness of  $\mathbb{E}_{\mathbf{h}}\left[\varepsilon_{\Lambda_{e, \mathbf{h}}}(\sigma) \right]$.}

\begin{rem}
This formula admits natural generalizations to block fading channels \cite{Belfiore-Oggier} and MIMO channels \cite{belfiore_mimo}, though in what follows we will only need the explicit expression for the Rayleigh fast fading model.
\end{rem}

\begin{rem}
The scaling property of the flatness factor is inherited for all fading models,
\begin{equation}
\mathbb{E}_{\mathbf{h}}\left[\varepsilon_{\mathbf{h} \Lambda_{e}}(\sigma) \right] = \mathbb{E}_{\mathbf{h}}\left[\varepsilon_{a \mathbf{h} \Lambda_{e}}(a \sigma) \right].
\end{equation}
In particular this implies that, for a fixed fading model, the channel can be studied only by varying $\sigma$. For instance, the above explicit bounds only depend on the ratio $\sigma_\mathbf{h}/ \sigma$.
Knowing the monotonicity and limit of the flatness factor, this implies that the ECPD bound \eqref{eqn:p_eve} is for any fading channel model a decreasing function of $\sigma$, tending at poor signal quality $ \sigma \to \infty$ to $\vol (\Lambda_b) / \vol(\Lambda_e) = [\Lambda_b : \Lambda_e]^{-1}$. 
\end{rem}

\subsection{Case 2}

We derive a new information leakage bound for Case 2, a fading mod $\Lambda_s$ channel with uniform coset representatives.

\begin{thm}
\label{thm:inf_bound_mod_lambda}
In the $\bmod$ $\Lambda_s$ channel case, let $\mathbf{m}$ be a message with any distribution on the message space $\mathcal{M}$ with $| \mathcal{M} | \ge 4$. Assume that $E:= \E_{\mathbf{h}}\left[\varepsilon_{\Lambda_{e, \mathbf{h}}}(\sigma)\right] \le 1/2$. Then
\begin{align}
I\left[\mathbf{m}; (\mathbf{y}/\Lambda_{s, \mathbf{h}} , \mathbf{h})\right] &\le (1 - 2E) [ 2 E \log | \mathcal{M} | -2 E \log(2 E) ] + 2E \log | \mathcal{M} |.
\end{align}
This bound is an increasing function of $E$, attaining the trivial bound $\log | \mathcal{M} |$ at $E = 1/2$.
\end{thm}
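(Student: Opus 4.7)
The plan is to reduce the fading-channel information to the AWGN bound of \eqref{eq: AWGN mod lattice information bound} by conditioning on $\mathbf{H}$, and then to average using Markov's and Jensen's inequalities, splitting the $\mathbf{H}$-space according to whether the AWGN bound is applicable. First, by the identity \eqref{eq: information with double condition} and the independence $M \perp \mathbf{H}$, I write
\begin{equation*}
\info[M; (\mathbf{Y}/\Lambda_{s,\mathbf{h}}, \mathbf{H})] = \mathbb{E}_{\mathbf{H}}\big[\info[M; \mathbf{Y}/\Lambda_{s,\mathbf{h}} \mid \mathbf{H} = \mathbf{h}]\big].
\end{equation*}
Given $\mathbf{H} = \mathbf{h}$, the conditional channel is an AWGN $\bmod\ \Lambda_{s,\mathbf{h}}$ channel with coset lattice $\Lambda_{e,\mathbf{h}}$, so whenever $\varepsilon_{\mathbf{h}} := \epsilon_{\Lambda_{e,\mathbf{h}}}(\sigma) \le 1/2$, \eqref{eq: AWGN mod lattice information bound} yields the conditional bound $\info \leq h(\varepsilon_{\mathbf{h}})$ with $h(\varepsilon) := 2\varepsilon \log\vert \mathcal{M} \vert - 2\varepsilon \log(2\varepsilon)$; otherwise I fall back on the trivial bound $\info \leq \log\vert \mathcal{M} \vert$.

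Second, I would split the outer expectation into the good event $\mathcal{E}_g = \{\varepsilon_{\mathbf{h}} \leq 1/2\}$ and its complement $\mathcal{E}_b$, and upper bound $\mathbb{P}(\mathcal{E}_b) \leq 2E$ by Markov's inequality, giving
\begin{equation*}
\mathbb{E}_{\mathbf{H}}[\info] \leq \mathbb{E}_{\mathbf{H}}\big[h(\varepsilon_{\mathbf{h}}) \mathbb{I}_{\mathcal{E}_g}\big] + \mathbb{P}(\mathcal{E}_b) \log\vert \mathcal{M} \vert.
\end{equation*}
A direct differentiation shows $h$ is concave and, for $\vert \mathcal{M} \vert \geq 4$, increasing on $(0, 1/2]$; so Jensen's inequality applied conditionally on $\mathcal{E}_g$ gives $\mathbb{E}\big[h(\varepsilon_{\mathbf{h}}) \mathbb{I}_{\mathcal{E}_g}\big] \leq \mathbb{P}(\mathcal{E}_g) h(\mathbb{E}[\varepsilon_{\mathbf{h}} \mid \mathcal{E}_g])$. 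The elementary observation that $\varepsilon_{\mathbf{h}} > 1/2 \geq E$ on $\mathcal{E}_b$ forces $\mathbb{E}[\varepsilon_{\mathbf{h}} \mid \mathcal{E}_g] \leq E$ (since $E = \mathbb{P}(\mathcal{E}_g)\mathbb{E}[\varepsilon_\mathbf{h}\mid\mathcal{E}_g] + \mathbb{P}(\mathcal{E}_b)\mathbb{E}[\varepsilon_\mathbf{h}\mid\mathcal{E}_b] \geq \mathbb{P}(\mathcal{E}_g)\mathbb{E}[\varepsilon_\mathbf{h}\mid\mathcal{E}_g] + \mathbb{P}(\mathcal{E}_b) E$), so by monotonicity of $h$ I obtain $\mathbb{E}\big[h(\varepsilon_{\mathbf{h}}) \mathbb{I}_{\mathcal{E}_g}\big] \leq (1 - \mathbb{P}(\mathcal{E}_b)) h(E)$.

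Third, combining the two contributions yields the ``pre-Markov'' estimate $(1 - \mathbb{P}(\mathcal{E}_b)) h(E) + \mathbb{P}(\mathcal{E}_b) \log\vert \mathcal{M} \vert$. This is non-decreasing in $\mathbb{P}(\mathcal{E}_b)$ since $\log\vert \mathcal{M} \vert = h(1/2) \geq h(E)$ for $E \leq 1/2$, so substituting the Markov upper bound $\mathbb{P}(\mathcal{E}_b) \leq 2E$ produces exactly the stated inequality. Monotonicity in $E$ and the trivial value $\log\vert \mathcal{M} \vert$ at $E = 1/2$ then follow by inspection.

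The main technical subtlety I foresee is handling the realizations with $\varepsilon_{\mathbf{h}} > 1/2$, where the AWGN information bound fails and one must resort to the trivial bound; the splitting is precisely what produces the two-term structure of the result. A secondary subtlety is the $\mathbb{E}[\varepsilon_{\mathbf{h}} \mid \mathcal{E}_g] \leq E$ step, which crucially uses that on $\mathcal{E}_b$ we have $\varepsilon_{\mathbf{h}} \geq 1/2 \geq E$; without this, Jensen would only yield the weaker bound $h(E/\mathbb{P}(\mathcal{E}_g))$.
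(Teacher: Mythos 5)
Your proof is correct and follows essentially the same route as the paper's: condition on $\mathbf{H}=\mathbf{h}$, obtain the per-realization bound $h(\varepsilon_{\mathbf{h}})$ when $\varepsilon_{\mathbf{h}}\le 1/2$ and the trivial bound otherwise, then combine Jensen (for the concave, increasing $h$ on $[0,1/2]$, together with $\E[\varepsilon_{\mathbf{h}}\mid\mathcal{E}_g]\le E$) with Markov's inequality $\mathbb{P}(\mathcal{E}_b)\le 2E$. The only cosmetic difference is that you invoke the AWGN mod-$\Lambda_s$ bound \eqref{eq: AWGN mod lattice information bound} as a black box for the conditional channel, whereas the paper re-derives the underlying variational-distance estimate (including the split of $\mathbf{Y}$ into components parallel and perpendicular to the lattices) before applying Lemma~\ref{lemma: ff and information}; this does not change the substance of the argument.
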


\begin{proof}
See the Appendix.
\end{proof}

\subsection{Case 3}

We derive an information leakage bound for a fading channel with discrete Gaussian coset representatives. This setup has been considered earlier in \cite{mirghasemi,luzzi_isit16}, and our computation is a variant that yields an explicit bound as a function of the average flatness factor. We first state a lemma which gives a discrete analogue of the fact that the sum of two Gaussians is a Gaussian \cami{in the following sense: Suppose that $\mathbf{h} \in \mathbb{R}^{m \times n}$ is a deterministic matrix and $\mathbf{x} \in \mathbb{R}^n$ and $\mathbf{n} \in \mathbb{R}^m$ are independent (continuous) Gaussian vectors $\mathbf{x} \sim \mathcal{N}(\mathbf{0}, \sigma_s^2 I_n)$ and $\mathbf{n} \sim \mathcal{N}(\mathbf{0}, \sigma^2 I_m)$. Then, by basic probability theory, $\mathbf{hx} + \mathbf{y}$ is also a (continuous) Gaussian, with law $\mathcal{N}(\mathbf{0}, \sigma^2 I_{m} + \sigma_s^2 \mathbf{h}\mathbf{h}^t)$. In the lemma below we are interested in the case where $\mathbf{x}$ instead is \textit{discrete} Gaussian with parameter $\sigma_s^2$. Clearly, $\mathbf{hx} + \mathbf{y}$ is then not continuous Gaussian anymore, but the lemma quantizes the fact that (with suitable parameters) it is still fairly close to $\mathcal{N}(\mathbf{0}, \sigma^2 I_{m} + \sigma_s^2 \mathbf{h}\mathbf{h}^t)$.}

Similar estimates depicting this have been given in \cite[Lemma~5]{luzzi_isit16}, \cite[Lemma~1]{mirghasemi} and \cite[Theorem~3.1]{Peikert}.  

\begin{lemma}
\label{lem:discrete_continuous_gaussian}
Fix $\mathbf{h} \in \R^{m \times n}$ and let $\mathbf{x}$ have the centered discrete Gaussian distribution $D_{\Lambda_e,\lambda_\mathbf{m}} (\mathbf{x}; \sigma_s) $, where $\Lambda_e \subset \R^n$ is full. Let $\mathbf{n} \sim \mathcal{N}(\mathbf{0}, \sigma^2 \mathbf{I}_{m})$ be a spherical (continuous) Gaussian vector independent of $\mathbf{x}$. Assume furthermore that $\varepsilon_{\sqrt{\sigma^2/\sigma_s^2 I_{n} +  \mathbf{h}^t\mathbf{h}}\Lambda_e}(\sigma) \le \varepsilon_{\mathrm{max} }$ for some $\varepsilon_{\mathrm{max} } < 1$. Let $\tilde{\rho}(\mathbf{y})$ be the PDF of $\mathcal{N}(\mathbf{0}, \sigma^2 I_{m} + \sigma_s^2 \mathbf{h}\mathbf{h}^t)$, that is,
\begin{equation}
\tilde{\rho} (\mathbf{y}) = \frac{1}{(\sqrt{2 \pi})^m\sqrt{\det  (\sigma^2 I_{m} + \sigma_s^2 \mathbf{hh}^t)}}\exp\left(-\frac{1}{2}  \mathbf{y}^t(\sigma^2 I_{m} + \sigma_s^2 \mathbf{hh}^t)^{-1} \mathbf{y}\right) .
\end{equation}
If $\rho (\mathbf{y})$ denotes the PDF of $\mathbf{y} = \mathbf{h} \mathbf{x} + \mathbf{n}$, then
\begin{equation}
V(\tilde{\rho}(\mathbf{y}),\rho(\mathbf{y})) \leq \frac{2\varepsilon_{\sqrt{\sigma^2/\sigma_s^2 I_{n} +  \mathbf{h}^t\mathbf{h}}\Lambda_e}(\sigma)}{1-\varepsilon_{\max}}.
\end{equation}
\end{lemma}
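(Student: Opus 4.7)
The strategy is a direct computation: write the PDF of $\mathbf{Y} = \mathbf{h}\mathbf{X}+\mathbf{N}$ as a sum of products of continuous Gaussians, complete the square in $\mathbf{x}$, and identify the residual factor in $\mathbf{y}$ as precisely $\tilde{\rho}(\mathbf{y})$ while the remaining $\mathbf{x}$-sum is controlled by the flatness factor of a rescaled lattice. First I would write
$$\rho(\mathbf{y}) \;=\; \frac{1}{g_n(\Lambda_e+\lambda_M;\sigma_s)} \sum_{\mathbf{x}\in\lambda_M+\Lambda_e} g_n(\mathbf{x};\sigma_s)\, g_m(\mathbf{y}-\mathbf{h}\mathbf{x};\sigma),$$
and expand the exponent of a single summand as $-\tfrac12\mathbf{x}^t A\mathbf{x} + \sigma^{-2}\mathbf{y}^t\mathbf{h}\mathbf{x} - \|\mathbf{y}\|^2/(2\sigma^2)$, where $A=\sigma^{-2}\mathbf{h}^t\mathbf{h} + \sigma_s^{-2}I_n$. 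Completing the square at $\mathbf{x}_0 := \sigma^{-2}A^{-1}\mathbf{h}^t\mathbf{y}$ and applying the Woodbury identity to $K=\sigma^2 I_m+\sigma_s^2\mathbf{h}\mathbf{h}^t$ identifies the leftover $\mathbf{y}$-quadratic as exactly $-\tfrac12 \mathbf{y}^t K^{-1}\mathbf{y}$.

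Next I would observe that $\sigma^2 A = B^t B$ for the symmetric positive-definite $B := \sqrt{\sigma^2/\sigma_s^2 I_n + \mathbf{h}^t\mathbf{h}}$, so the remaining sum equals
$$\sum_{\mathbf{x}\in\lambda_M+\Lambda_e} \exp\!\Bigl(-\tfrac{\|B(\mathbf{x}-\mathbf{x}_0)\|^2}{2\sigma^2}\Bigr) = (2\pi)^{n/2}\sigma^n\, g_n\!\bigl(B\Lambda_e + B(\lambda_M-\mathbf{x}_0);\sigma\bigr),$$
which by \eqref{eq: gaussian flatness factor formula} equals $(2\pi)^{n/2}\sigma^n(1+\delta(\mathbf{y}))/\vol(B\Lambda_e)$ with $|\delta(\mathbf{y})| \le \varepsilon := \varepsilon_{B\Lambda_e}(\sigma)$. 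Sylvester's determinant identity yields $\det K = \sigma^{2(m-n)}\sigma_s^{2n}\det(B^t B)$, so once I collect the constants $\tfrac{1}{(2\pi)^{(n+m)/2}\sigma_s^n\sigma^m}$ from the two Gaussians together with $(2\pi)^{n/2}\sigma^n/\vol(B\Lambda_e)$ and the factor $\vol(\Lambda_e)/\vol(B\Lambda_e)$ hidden in the lattice rescaling, the prefactors collapse neatly onto $1/((2\pi)^{m/2}\sqrt{\det K})$, which is the normalization of $\tilde{\rho}$. Similarly, the denominator $g_n(\Lambda_e+\lambda_M;\sigma_s) = (1+\delta')/\vol(\Lambda_e)$ with $|\delta'|\le\varepsilon_{\Lambda_e}(\sigma_s)$. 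Altogether
$$\rho(\mathbf{y}) = \frac{1+\delta(\mathbf{y})}{1+\delta'}\, \tilde{\rho}(\mathbf{y}).$$

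Before concluding, I need to control $\delta'$ by the same $\varepsilon$. Using \eqref{eq: dual theta flatness factor formula}, $\varepsilon_{\Lambda_e}(\sigma_s) = \Theta_{\Lambda_e^\star}(e^{-2\pi\sigma_s^2})-1$ and $\varepsilon_{B\Lambda_e}(\sigma) = \Theta_{B^{-1}\Lambda_e^\star}(e^{-2\pi\sigma^2})-1$ (exploiting that $B$ is symmetric, so $(B\Lambda_e)^\star = B^{-1}\Lambda_e^\star$). Since $B^tB \succeq (\sigma^2/\sigma_s^2)I_n$ implies $\|B^{-1}\mathbf{v}\|^2 \le (\sigma_s^2/\sigma^2)\|\mathbf{v}\|^2$, a termwise comparison of the two theta series gives $\varepsilon_{\Lambda_e}(\sigma_s) \le \varepsilon$, so $|\delta'|\le\varepsilon\le\varepsilon_{\mathrm{max}}$.

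Finally, a pointwise bound
$$\left|\frac{1+\delta(\mathbf{y})}{1+\delta'}-1\right| \le \frac{|\delta(\mathbf{y})|+|\delta'|}{1-|\delta'|} \le \frac{2\varepsilon}{1-\varepsilon_{\mathrm{max}}}$$
followed by integration against $\tilde{\rho}$ yields the claim. I expect the main obstacle to be purely bookkeeping: keeping the $\sigma$, $\sigma_s$, $\vol(\Lambda_e)$, $\vol(B\Lambda_e)$ and $\sqrt{\det K}$ factors aligned through the square completion so that the prefactors really cancel to give exactly $\tilde{\rho}(\mathbf{y})$, with Sylvester's identity and Woodbury's identity doing the essential work. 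The additional dual-lattice comparison needed to bound $\delta'$ by $\varepsilon$ is the only step where one must be a little careful and not just mechanical.
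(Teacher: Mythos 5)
Your proposal is correct and follows essentially the same route as the paper's own proof in the appendix: write $\rho$ as a lattice sum of Gaussian products, complete the square (the paper verifies the Woodbury- and Sylvester-type identities by hand via an explicit matrix computation and an SVD), identify $\tilde{\rho}$, express the remainder as a ratio of two lattice Gaussian sums each controlled by a flatness factor, compare the denominator's flatness factor to $\varepsilon$ through the dual theta series exactly as you do, and integrate the pointwise bound $\frac{2\varepsilon}{1-\varepsilon_{\mathrm{max}}}\tilde{\rho}$. The only differences are cosmetic normalizations ($B$ versus the paper's $\mathbf{Q}=\sigma_s B$).
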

\begin{proof}
See the Appendix.
\end{proof}

\begin{thm}
\label{thm: info bd}
Consider the fading channel with discrete Gaussian coset coding. Take a message $\mathbf{m}$ with any distribution on the message space $\mathcal{M}$ with $| \mathcal{M} | \ge 4$. Assume that 
    $E := \E_{\mathbf{h}}\left[\varepsilon_{\sqrt{ \sigma^2 / \sigma_s^2 I_{n} +  \mathbf{h}^t \mathbf{h} } \Lambda_e}(\sigma)\right] \le 1/5.$
Then,
\begin{align}
    I\left[\mathbf{m}; (\mathbf{y} , \mathbf{h})\right] \le (1-5E)[ 5 E  \log | \mathcal{M} | - 5 E \log (5 E) ] + 5 E  \log | \mathcal{M} |.
\end{align}
This bound is an increasing function of $E$, attaining the trivial bound $\log | \mathcal{M} |$ at $E = 1/5$.
\end{thm}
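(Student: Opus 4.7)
The plan is to mimic the structure of the proof of Theorem~\ref{thm:inf_bound_mod_lambda} almost verbatim, replacing the appeal to the definition of the flatness factor by the Gaussian-approximation Lemma~\ref{lem:discrete_continuous_gaussian}. Using the independence assumptions and \eqref{eq: information with double condition}, I would first write
$$\info\left[M; (\mathbf{Y},\mathbf{H})\right]=\E_{\mathbf{H}}\!\left[\info\left[M;\mathbf{Y}\mid \mathbf{H}=\mathbf{h}\right]\right],$$
and bound the conditional information for each fixed channel realization $\mathbf{h}$ separately.

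For a fixed $\mathbf{h}$, the PDF of $\{\mathbf{Y}\mid M=m,\mathbf{H}=\mathbf{h}\}$ comes from convolving the discrete Gaussian $D_{\Lambda_e,\boldsymbol{\lambda}_m}(\cdot;\sigma_s)$ (transformed by $\mathbf{h}$) with the continuous noise $\mathcal{N}(\mathbf{0},\sigma^2 I_m)$. The key idea is to compare all of these $m$-dependent PDFs against the single message-independent Gaussian $\tilde\rho$ from Lemma~\ref{lem:discrete_continuous_gaussian}. Writing $\varepsilon(\mathbf{h}):=\varepsilon_{\sqrt{\sigma^2/\sigma_s^2\, I_n+\mathbf{h}^t\mathbf{h}}\,\Lambda_e}(\sigma)$, I would choose the cutoff $\varepsilon_{\mathrm{max}}=1/5$ in Lemma~\ref{lem:discrete_continuous_gaussian}, so that whenever $\varepsilon(\mathbf{h})\le 1/5$ the variational distance is bounded by $2\varepsilon(\mathbf{h})/(1-1/5)=\tfrac{5}{2}\varepsilon(\mathbf{h})\le 1/2$. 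This choice is rigged precisely so that the resulting bound fed into Lemma~\ref{lemma: ff and information} produces the coefficient $5$ appearing in the statement; indeed $h\bigl(\tfrac{5}{2}\varepsilon,|\mathcal{M}|\bigr)=5\varepsilon\log|\mathcal{M}|-5\varepsilon\log(5\varepsilon)$. For $\varepsilon(\mathbf{h})>1/5$ I would fall back on the trivial bound $\log|\mathcal{M}|$.

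Taking expectation over $\mathbf{H}$ and splitting according to whether $\varepsilon(\mathbf{h})\le 1/5$, one obtains
\begin{align*}
\info\left[M;(\mathbf{Y},\mathbf{H})\right]\le\mathbb{P}_{\mathbf{H}}\!\left[\varepsilon\le\tfrac{1}{5}\right]\E_{\{\mathbf{H}\mid\varepsilon\le 1/5\}}\!\left[h\bigl(\tfrac{5}{2}\varepsilon,|\mathcal{M}|\bigr)\right]+\mathbb{P}_{\mathbf{H}}\!\left[\varepsilon>\tfrac{1}{5}\right]\log|\mathcal{M}|.
\end{align*}
From here the argument is the same as in Theorem~\ref{thm:inf_bound_mod_lambda}. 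Since $h(\cdot,|\mathcal{M}|)$ is concave and increasing on $[0,1/2]$, Jensen's inequality bounds the conditional expectation by $h\bigl(\tfrac{5}{2}\min\{E,1/5\},|\mathcal{M}|\bigr)$. Markov's inequality gives $\mathbb{P}_{\mathbf{H}}[\varepsilon>1/5]\le 5E$, which also serves as the weight of the (larger) trivial-bound term in the convex combination. Combining these yields exactly
$$(1-5E)\bigl[5E\log|\mathcal{M}|-5E\log(5E)\bigr]+5E\log|\mathcal{M}|$$
when $E\le 1/5$, and the trivial $\log|\mathcal{M}|$ otherwise. Monotonicity in $E$ and the boundary value at $E=1/5$ are then immediate by inspection.

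The main obstacle I anticipate is choosing the cutoff $\varepsilon_{\mathrm{max}}$ to make the two constraints in Lemma~\ref{lem:discrete_continuous_gaussian} and Lemma~\ref{lemma: ff and information} compatible while landing on the clean constant $5$; the algebraic identity $\varepsilon_{\mathrm{max}}=1/5\Leftrightarrow 2\varepsilon_{\mathrm{max}}/(1-\varepsilon_{\mathrm{max}})=1/2$ is what makes both the $\varepsilon(\mathbf{h})\le\varepsilon_{\mathrm{max}}$ and the $\delta\le 1/2$ thresholds coincide. The rest is bookkeeping.
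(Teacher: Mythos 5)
Your proposal is correct and follows essentially the same route as the paper's own proof: conditioning on $\mathbf{H}$, applying Lemma~\ref{lem:discrete_continuous_gaussian} with $\varepsilon_{\mathrm{max}}=1/5$ so that the variational-distance bound $\tfrac{5}{2}\varepsilon$ hits the $1/2$ threshold of Lemma~\ref{lemma: ff and information}, then splitting on $\{\varepsilon\le 1/5\}$, using Jensen for the conditional expectation and Markov to inflate the weight of the trivial-bound term. As a minor point in your favor, you correctly identify $h(\cdot,|\mathcal{M}|)$ as \emph{concave} (which is the direction of Jensen actually needed), whereas the paper's proof of Theorem~\ref{thm:inf_bound_mod_lambda} calls it convex while using the concave inequality.
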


\begin{proof}
See the Appendix.
\end{proof}

Note that the derivations of Theorems \ref{thm:inf_bound_mod_lambda} and \ref{thm: info bd} actually hold for any $E$. For $E \ge 1/2$ or $E \ge 1/5$ the respective bounds coincide with the trivial bound $\log | \mathcal{M} |$, suggesting that the derivations are, in a way, optimal. 

 
At the first glance, the flatness factors in the information leakage bounds of Theorems \ref{thm:inf_bound_mod_lambda} and \ref{thm: info bd} are different. However, it is often reasonable to assume that the power invested in coset coding is larger than that related to the receiver's noise, i.e that $\sigma^2/\sigma_s^2 \ll 1$, and in the limit, $\sigma^2/\sigma_s^2 \to 0^+$. The following proposition shows that the two bounds coincide. A similar limit is stated in \cite{mirghasemi}. Intuitively, the more power Alice invests in coset coding, the better secrecy she has.

\begin{prop}
The expectations $E := \E_{\mathbf{h}}\left[\varepsilon_{\sqrt{ \sigma^2 / \sigma_s^2 I_{n} +  \mathbf{h}^t \mathbf{h} } \Lambda_e}(\sigma)\right]$, when they are finite, are strictly decreasing as a function of $\sigma_s$, and tend to $\E_{\mathbf{h}}\left[\varepsilon_{\mathbf{h}  \Lambda_e}(\sigma)\right] $ as $\sigma_s \to \infty$.
\end{prop}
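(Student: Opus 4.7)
The plan is to represent the flatness factor inside the expectation, for each fixed channel realization $\mathbf{h}$, as a dual-lattice theta-type sum whose summands are individually strictly monotone in $\sigma_s$, and then to integrate against the law of $\mathbf{H}$.

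To this end I would abbreviate $Q(\sigma_s) := \sqrt{\sigma^2\sigma_s^{-2} I_n + \mathbf{h}^t\mathbf{h}}$ and $B(\sigma_s) := Q(\sigma_s)^2/\sigma^2 = \sigma_s^{-2} I_n + \sigma^{-2}\mathbf{h}^t\mathbf{h}$. Starting from the primal formula \eqref{eq: primal theta flatness factor formula} for $\varepsilon_{Q(\sigma_s)\Lambda_e}(\sigma)$ and applying Poisson summation over $\Lambda_e$ to the anisotropic Gaussian $x \mapsto \exp(-\tfrac{1}{2} x^t B(\sigma_s) x)$, a short calculation --- the genuine-quadratic-form generalization of the identity that converts \eqref{eq: primal theta flatness factor formula} into \eqref{eq: dual theta flatness factor formula} --- should yield
\begin{equation*}
\varepsilon_{Q(\sigma_s)\Lambda_e}(\sigma) = \sum_{\mu \in \Lambda_e^\star \setminus \{\mathbf{0}\}} \exp\!\left(-2\pi^2\, \mu^t B(\sigma_s)^{-1} \mu\right).
\end{equation*}

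This identity is the technical core of the argument; once it is available, the proposition is driven entirely by the monotonicity of $B(\sigma_s)^{-1}$. For $\sigma_s < \sigma_s'$, the matrix difference $B(\sigma_s)-B(\sigma_s') = (\sigma_s^{-2}-\sigma_s'^{-2})I_n$ is strictly positive definite, so the standard Loewner-order inversion inequality gives $B(\sigma_s)^{-1}\succ B(\sigma_s')^{-1}$. Consequently $\mu^t B(\sigma_s)^{-1}\mu > \mu^t B(\sigma_s')^{-1}\mu$ for every nonzero $\mu$, each summand in the display strictly decreases in $\sigma_s$, and the whole sum strictly decreases a.s.\ in $\mathbf{H}$ (the a.s.\ full-rank assumption making $B(\sigma_s)$ almost surely positive definite). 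Integrating this pointwise strict decrease against the distribution of $\mathbf{H}$ preserves the strict monotonicity of $E(\sigma_s)$.

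For the limit, as $\sigma_s\to\infty$ each summand decreases monotonically to $\exp(-2\pi^2\sigma^2\mu^t(\mathbf{h}^t\mathbf{h})^{-1}\mu)$, which by the same Poisson identity applied to $\mathbf{h}\Lambda_e$ --- regarded as a full-rank lattice in its span via the convention $\varepsilon_{\mathbf{h}\Lambda_e}=\varepsilon_{\sqrt{M_{\Lambda_e}^t\mathbf{h}^t\mathbf{h} M_{\Lambda_e}}}$ recorded in Section~\ref{subsec: geom properties} --- sums to $\varepsilon_{\mathbf{h}\Lambda_e}(\sigma)$. Two applications of monotone convergence, first exchanging $\sigma_s\to\infty$ with $\sum_\mu$ at fixed $\mathbf{h}$ and then with $\E_{\mathbf{H}}$, deliver $E(\sigma_s)\downarrow \E_{\mathbf{H}}[\varepsilon_{\mathbf{h}\Lambda_e}(\sigma)]$. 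The main (and only) obstacle is the Poisson-summation computation producing the displayed anisotropic dual sum; everything downstream is routine order-theoretic monotonicity and monotone convergence.
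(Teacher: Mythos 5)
Your proposal is correct and follows essentially the same route as the paper: both rewrite $\varepsilon_{\sqrt{\sigma^2/\sigma_s^2 I_n+\mathbf{h}^t\mathbf{h}}\Lambda_e}(\sigma)$ via Poisson summation (the dual theta formula) as a sum over $\Lambda_e^\star\setminus\{\mathbf{0}\}$ of Gaussians in $B(\sigma_s)^{-1}$, observe termwise strict monotonicity in $\sigma_s$, and pass to the limit under the expectation. The paper simply diagonalizes $\mathbf{h}^t\mathbf{h}$ and argues coordinate-wise instead of invoking the Loewner order, and uses dominated rather than monotone convergence; these are cosmetic differences (and your exponent $-2\pi^2\mu^tB^{-1}\mu$ is in fact the correct normalization of the paper's displayed series).
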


\begin{proof}
We write the dual formula for the flatness factor \eqref{eq: dual theta flatness factor formula} in the eigenbasis of $\mathbf{h}^t \mathbf{h}$, where $\mathbf{h}^t \mathbf{h} = \diag(h_i^2)$,
\begin{align}
\label{eq: first theta series'} \varepsilon_{\sqrt{ \sigma^2 / \sigma_s^2 I_{n} +  \mathbf{h}^t \mathbf{h} } \Lambda_e}(\sigma) &= \sum_{\mathbf{t} \in \Lambda_e^*} \exp \left( -  2 \pi \sum_{i=1}^n \frac{t_i^2}{h_i^2/ \sigma^2 + 1/ \sigma_s^2}\right) - 1,
\end{align}
and
\begin{align}
\label{eq: second theta series'}
\varepsilon_{ \mathbf{h}  \Lambda_e}(\sigma) &= \varepsilon_{\sqrt{ \mathbf{h}^t \mathbf{h} } \Lambda_e}(\sigma) = \sum_{\mathbf{t} \in \Lambda_e^*} \exp \left( -  2 \pi \sum_{i=1}^n \frac{t_i^2}{1/ \sigma^2}\right) - 1.
\end{align}
The monotonicity in $\sigma_s$ is clear from \eqref{eq: first theta series'}, and the limiting property for the expected values follows by dominated convergence, since from \eqref{eq: first theta series'} and \eqref{eq: second theta series'} we have that $\varepsilon_{\sqrt{ \sigma^2 / \sigma_s^2 I_{n} +  \mathbf{h}^t \mathbf{h} } \Lambda_e}(\sigma)$ decreases to $ \varepsilon_{ \mathbf{h}  \Lambda_e}(\sigma)$ as $\sigma_s \to \infty$.
\end{proof}

\subsection{Discussion}

In both Gaussian and fading channel models, the error probability and information leakage bounds
 agree on minimizing the (average) flatness factor $\varepsilon_{  \Lambda_e}(\sigma)$ and $\E [ \varepsilon_{ \mathbf{h}  \Lambda_e}(\sigma) ]$, respectively \cami{(cf. \eqref{eq: flatness factor definition}, \eqref{eq:averageFF}).} At poor signal quality $\sigma \to \infty$, the probability bound decreases to the uniform guess probability and the information decreases to zero. The probability upper bound \eqref{Pce1} is a relatively good approximation for large $\sigma$ --- see error terms in \cite{Belfiore-Oggier}. The derivation of \eqref{eqn:p_eve} contained no new approximations after \eqref{Pce1}, and can therefore be expected to be approximative at poor signal quality.

On the contrary, the two information leakage bounds seem to be mostly suitable for achievability proofs and poor-signal asymptotics $\sigma \to \infty $. Substituting the average flatness factors $E=1/2$ and $E=1/5$ for which the respective information leakage bounds become trivial, the probability bound \eqref{eqn:p_eve} suggests a notably good secrecy. Nonetheless, the agreement of the information and probability bounds, the latter being more approximative a quantification and the former a more rigorous approach, indeed suggests that the average flatness factor should be taken as a design criterion of both practical lattice design and information-theoretic constructions.

We conclude this section by rephrasing the full wiretap problem and the design criterion: we study sublattices $\Lambda_e$ of a fixed reliability lattice $\Lambda_b \subset \R^n$, with a fixed coset code rate $\log_2 [\Lambda_b : \Lambda_e]/n$ (bits per real channel use). Equivalently, we fix the index $[ \Lambda_b : \Lambda_e]$. We design secure coset codes $\Lambda_b:\Lambda_e$ in the relevant low-SNR range by 
\begin{center}
\boxed{
\begin{minipage}{0.6\textwidth}
\begin{center}
The average flatness factor criterion: $\mathrm{minimize } \ \ \E_{\mathbf{h}} [ \varepsilon_{ \mathbf{h}  \Lambda_e}(\sigma) ].$
\end{center}
\end{minipage}
}
\end{center}

In the next section, we show how to translate this criterion into something constructive. Namely, we justify why well-rounded lattices form an appropriate family to which this minimization problem can be restricted.

\section{Well-rounded lattices and the flatness factor}
\label{sec: WR}


Finding lattices achieving local/global minima of the theta series is a long-standing problem in analytic number theory. In fact, using an assortment of tools from analytic number theory and the theory of quadratic forms, Sarnak and Str\"ombergsson \cite{Sarnak-Strombergsson} proved that, for any $\sigma > 0$, the $D_4$ lattice, the $E_8$ lattice, and the Leech lattice achieve a strict local minimum of $\Theta_\Lambda(\sigma)$ over the set of volume one lattices.
In the next section we will analyze the behaviour of $D_4$ and $E_8$ in the wiretap context and compare their performance to other lattices (in their respective dimensions).
Some of the best known sphere packings in low dimension are constructed in \cite{CostaDn} and \cite{Jorge2} from real number fields. In our simulations we will consider the construction of $D_4$ from \cite{CostaDn} and $E_8$ from \cite{Jorge2}.

Note that the lattices above maximize the packing density function (globally), so
a natural question to ask is whether or not the global minimum for the theta series function is also the densest packing in a given dimension. Unfortunately, this is not true in general. In fact, in \cite{Sarnak-Strombergsson} it was proven that the face-centered cubic lattice (fcc) being the densest packing in dimension three cannot be a global minimum for the theta series.
To make this precise, we will first show some links between the theta series and the Epstein zeta function.

The \textit{Epstein zeta function} of a lattice $\Lambda\subseteq\mathbb{R}^n$ is defined, for $s\in\mathbb{C}$ with $\Re(s) > \frac{n}{2}$, as
$$E(\Lambda,s) = \sum_{\substack{0\neq x\in \Lambda}}\frac{1}{||x||^{2s}}.$$
In analogy to the Riemann zeta function, the Epstein zeta function admits a meromorphic continuation to the complex plane with a simple pole at $s = \frac{n}{2}$.

The \textit{Gamma function} is defined as
$$\Gamma(s)=\int_{0}^{\infty}z^{s-1}e^{-z}dz,~\mbox{for}~s\in\mathbb{C},~\Re(s)>0. $$
Using an elementary change of variables we get 
\begin{equation}
\label{gamma}
 (2\pi)^{-s}||x||^{-2s}\Gamma(s)=2\int_{0}^{\infty}e^{-2\pi\sigma^2||x||^2}\sigma^{2s-1}d\sigma.  
\end{equation}
Let $\Lambda$ be a lattice in $\mathbb{R}^n$, then summing over all the non-zero lattice points in $\Lambda^{*}$ we get 
\begin{equation}
\label{epst}
     (2\pi)^{-s}\Gamma(s)E(\Lambda^*,s)=2\int_{0}^{\infty}\varepsilon_{\Lambda}(\sigma)\sigma^{2s-1}d\sigma.
\end{equation}
Equation (\ref{epst}) shows that the minima of $\Lambda\mapsto\varepsilon_{\Lambda}$ are closely related to the minima of $\Lambda\mapsto E(\Lambda,s)$. In fact, we have

\begin{equation}\label{thetaEps}
   \varepsilon_{\Lambda_1}\geq \varepsilon_{\Lambda_2}\ \forall \sigma>0 \ \Rightarrow \  E(\Lambda_1^*,s)\geq E(\Lambda_2^*,s)\  \forall s>0. 
\end{equation}

Note that all the optimization problems above are taken over $\mathcal{L}_n$ the space of volume one lattices in $\mathbb{R}^n$.

\begin{definition}
We say that a lattice $L$ is \textit{E-extremal} at $s_0\in\mathbb{C}$ if  
$$E(\Lambda,s)\geq E(L,s) ~\forall s>s_0,~\Lambda\in \mathcal{L}_n.$$
We say that $L$ is \textit{universally extremal} if
$$ E(\Lambda,s)\geq E(L,s)\ ~\forall s>0,~\Lambda\in \mathcal{L}_n.$$
\end{definition}

Notice that the universality property for $L^*$ is a necessary condition for $L$ to be a global minimum of the flatness factor $\varepsilon_{\Lambda}$.


We will still go one step further and use the connection that theta and zeta functions have with \textit{spherical designs}. In \cite{coulangeon}, Coulangeon extended the study from \cite{Sarnak-Strombergsson} and established a relation between local minima of zeta and theta functions of lattices and spherical designs.

\begin{definition}
    A \emph{spherical $t$-design} $X$ is a finite set of points on a sphere $\mathbb{S}^{n}$ such that
    $$\int_{\mathbb{S}^{n-1}}p(x)dx=\frac{1}{|X|}\sum_{x\in X}p(x),$$
    for any homogeneous polynomial $p$ of degree $\leq t$.
    
\end{definition}

There are various characterizations of a spherical design, and in what follows we will use the following alternative definition.

\begin{definition}\label{defspher}
    Let $n\geq 2$ and let $X$ be a finite subset of a sphere  $\mathbb{S}^{n}(r)$ of radius $r$. Assume that $X$ is symmetric with respect to the origin. Then, for any positive even integer $t$, $X$ is a \emph{spherical $t$-design} if and only if there exists a constant $c$ such that, for all $\alpha\in\mathbb{R}^n$,
    $$\sum_{x\in X}\langle x,\alpha\rangle^t = cr^{t/2} \langle\alpha,\alpha\rangle^{t/2},$$
    where $\langle\ ,\ \rangle$ denotes the usual scalar product.
\end{definition}


The following result connects local minimum for the Epstein zeta function and $4$-designs.

\begin{thm}[\cite{coulangeon}, Thm 4.1]\label{coul}
    Let $L$ be a lattice in $\mathcal{L}_n$ such that all its layers hold a $4$-design. Then L is E-extremal at $s$ for any $s > n/2$. If moreover $E(L, s) < 0$ for $0 < s < n/2$ , then $L$ is universally extremal.
\end{thm}

To name just a few examples of lattices with all layers holding a $4$-design, we mention $D_4$, $E_8$, the Leech lattice $\Lambda_{24},$ and Barness-Wall lattices $BW_{2m}$ of dimension $2^{2^m}$. For a complete classification up to dimension 26 see \cite{bachoc}.

Unfortunately, explicit construction of lattices  satisfying the conditions of Theorem \ref{coul}  is problematic in two ways:
\begin{enumerate}
\item Lattices having all layers holding a $4$-design do not exist in every dimension (3, 5 and 9 for example). In general dimensions there are only finitely many similarity classes of such lattices.
\item The condition $E(L, s) < 0$ for $0 < s < n/2$, implies that the Epstein zeta function has no zeros in the right half plane. Indeed, Terras \cite{terras} showed that if $\lambda_1(L)\leq \mu \sqrt{\frac{n}{2\pi e}}$, then $E(L,s)$ has a zero in $(0,n/2)$ for some $\mu<1$ and sufficiently large $n$. The bound on $\lambda_1$ is satisfied by all the known lattices in large dimension. 
\end{enumerate}
Relaxing the universality property, the minima of Epstein zeta functions at some $s_0\in\mathbb{C}$ have a tight relationship with $2$-designs in the following way.

\begin{thm}[\cite{delone}, Thm. 4]\label{zeta}
    Let $\Lambda\subset\mathbb{R}^n$ be a lattice of volume one. Then the following conditions are equivalent.
    \begin{enumerate}
        \item  There exists $s_0 > 0$ such that $\Lambda$ is a local minimum for $E(\Lambda,s)$ at $s$, for any $s > s_0$.
        \item $\Lambda$ is perfect and all layers of $\Lambda$ hold a $2$-design.
    \end{enumerate}
\end{thm}

\Tao{

 We say that $\Lambda$ is a \textit{global minimum} of the flatness factor if $\varepsilon_{\Lambda'}\geq \varepsilon_{\Lambda} $ for all $\sigma>0$ and $\Lambda'\in\mathcal{L}_n$.
We will use Theorem \ref{zeta} to prove the following proposition.
\begin{prop}
Let $\Lambda$ be a lattice in $\mathcal{L}_n$ such that $\Lambda$ is a global minimum of the flatness factor. Then the dual lattice $\Lambda^*$ is well-rounded.
\end{prop}
\begin{proof}
Let $\Lambda\in\mathcal{L}_n$ be a global minimum of the flatness factor. Then by equation \eqref{thetaEps} the lattice $\Lambda^*$ is E-extremal. In particular, there exists $s_0>0$ such that $\Lambda^*$ is a local minimum of $E(\cdot,s)$. Theorem \ref{zeta} implies that $\Lambda^*$ is perfect, hence, well-rounded.
\end{proof}
In the above proof we used the fact that a local minimum of the Epstein zeta function is perfect, and that allowed us to conclude its well-roundedness. In fact, the lattices $\Lambda$ such that $S(\Lambda)$ holds a $2$-design are also WR.}
\begin{prop}\label{spher}
    Let $\Lambda$ be a lattice such that $S(\Lambda)$ holds a $2$-design. Then $\Lambda$ is a well-rounded lattice.
\end{prop}

\begin{proof}
    Let $\Lambda\subset\mathbb{R}^n$ be a lattice such that $S(\Lambda)$ holds a $2$-design. Assume that $\Lambda$ is not WR. Then $\mathcal{S}=\mathrm{Span}_{\mathbb{R}}(S(\Lambda))$ is a proper subset of $\mathbb{R}^n$.
    Hence, there exists $\alpha\neq 0$ such that $\alpha\in \mathcal{S}^\bot$. Thus, by Definition \ref{defspher}, we get
    $$\sum_{x\in S(\Lambda)}\langle x,\alpha\rangle^2 =0=c\lambda_{1}(\Lambda)||\alpha||^2,$$
    and this contradicts the fact that $\alpha\neq 0$.
\end{proof}

\begin{cor}
   \cami{ Lattices satisfying the conditions mentioned in Theorems \ref{coul} and \ref{zeta} are  well-rounded. }
\end{cor}

\begin{rem}
As mentioned in the introduction, in \cite{luzzi} the authors proposed the maximization of the normalized product distance of the lattice and its dual in the case of the wiretap fading channel; and the maximization of the packing density of the lattice and its dual for the Gaussian wiretap channel, both of which are optimization problems that can be restricted to WR lattices. 

A good general strategy is hence to consider WR sublattices of the legitimate receiver's lattice, which is chosen to have a large non-vanishing minimum product distance (SISO) or minimum determinant (MIMO), so that also the dual of the sublattice shares these properties.

\end{rem}


\cami{
\subsection{On the construction of well-rounded algebraic lattices}



Let $K$ be a number field of degree $n=r_1+2r_2$ and $\{\sigma_1,\dots,\sigma_{r_1},\sigma_{r_1+1},\overline{\sigma_{r_1+1}},\dots,\sigma_{r_1+r_2},\overline{\sigma_{r_1+r_2}}\}$ be the embeddings of $K$ into $\mathbb{C}$. Let $\alpha$ be a totally real, totally positive element in $K$, \textit{i.e.}, $\sigma_i(\alpha)$ positive and real for any $1\leq i\leq n$. We denote by $\mathcal{A}_K$ the set of totally real, totally positive elements in $K$.

The \textit{twisted embedding} $\rho_{\alpha}~:~K\rightarrow \R^n$ is defined by
\begin{multline*}
\rho_{\alpha}(x):=(\sqrt{\sigma_1(\alpha)}\sigma_1(x),\dots,\sqrt{\sigma_{r_1}(\alpha)}\sigma_{r_1}(x),\\
\sqrt{2\sigma_{r_{1}+1}(\alpha)}\Re{\sigma_{r_{1}+1}}(x),\sqrt{2\sigma_{r_{1}+1}(\alpha)}\Im{{\sigma_{r_{1}+1}}}(x),\dots,\sqrt{2\sigma_{r_{1}+r_2}(\alpha)}\Im{{\sigma_{r_{1}+r_2}}}(x)).
\end{multline*}
If $\alpha=1$, we will simply denote the lattice $\rho_{\alpha}(\mathcal{I})$ by $\rho(\mathcal{I})$.

It is well known \cite{bayerNF} that $\rho_{\alpha}(\mathcal{I})$ is a full rank lattice in $\R^n$, where $\mathcal{I}$ is an (fractional) ideal in $\mathcal{O}_K$ the ring of integers of $K$ and $\alpha\in \mathcal{A}_K$. 
Furthermore, if $K$ is a totally real number field, then $\rho_{\alpha}(\mathcal{I})$ is a full-diversity lattice.
For a detailed exposition on algebraic number fields see for instance \cite{neukirch2013algebraic}, for lattices over number fields we refer the reader to \cite{bayerNF}.
The aim of this section is to present a generic construction of some full-diversity WR lattices that are of the form $\Lambda=\rho_{\alpha}(\mathcal{I})$. 

In \cite{fukshanskypet}, the authors studied the well-roundedness of $\rho(\mathcal{O}_K)$, \emph{i.e.},  $\mathcal{I}=\mathcal{O}_K$ and $\alpha=1$. 
\begin{thm}\cite{fukshanskypet} \label{fuksh}
Let $K$ be a number field. Then
$\rho(\mathcal{O}_K)$ is WR if and only if $K$ is a cyclotomic field, \textit{i.e.}, $K=\Q(\zeta_m)$, where $\zeta_m$ is a primitive $m^{th}$ root of unity.
\end{thm}

Theorem \ref{fuksh} ensures that $\rho(\mathcal{O}_K)$ is never WR for any real number field $K$. Thus, to construct WR lattices over real number fields $K$ (full-diversity lattices) it is necessary to consider $\rho_\alpha(\mathcal{I})$, where $\mathcal{I}$ is a strict ideal in $\mathcal{O}_K$.

In the following we will exhibit a construction of WR lattices $\rho_\alpha(\mathcal{I})$, where $\mathcal{I}$ is a (fractional) ideal in the real maximal sub-field $K$ of the $m$-th cyclotomic field $\Q(\zeta_m)$. More precisely, $K=\Q(\zeta_m+\zeta_m^{-1})$.

We recall that $\Q(\zeta_m)$ is a classic example of a complex multiplication field (CM), that is, a number field which is a totally imaginary quadratic extension of a totally real number field $K$. Our construction relies on a shifting technique proposed in \cite{bayersuarez} to obtain some of the well-known lattices as lattices over real number fields using CM fields.

Hereon, we assume that $K$ is a CM field and $F$ is the maximal totally real subfield of $K$. By definition $K$ is a quadratic extension of $F$. Let $\gamma\in F$ such that $K = F(\sqrt{\gamma})$. Assume that $-1$ is not a square in $K$, and take $K'=F(\sqrt{-\gamma})$. The field $K'$ is then a quadratic extension of $F$ different from $K$. 
The extension $K/F$ is necessarily Galois. Furthermore, $K'$ is a totally real number field.

We define $\phi~:~ K\rightarrow K'$ to be a $K$-linear map such that 
$\phi(1)=1$ and $\phi(\gamma) = \sqrt{-\gamma}$.
In \cite{bayersuarez}, the authors introduced a technique to ''shift'' lattice constructions from ideals $\mathcal{I}$ over $K$ to ideals over $K'$.
We say that an ideal class $[\mathcal{I}]$ in the class group of $K$ is \textit{ambiguous} if $\sigma([\mathcal{I}] )= [\mathcal{I}]$, where $\sigma$ is the non-trivial automorphism of the Galois group $Gal(K/F)$. Now we are ready to state the main results in \cite{bayersuarez}.
\begin{prop}
\label{Bayer}(\cite{bayersuarez})
Let $\mathcal{I}$ be an ambiguous ideal in $K$. Then $\phi(\rho_\alpha(\mathcal{I}))$ is an ideal lattice of $K'$ similar to $\rho_\alpha(\mathcal{I})$.
\end{prop}
Using the  above notation, we have the following proposition. 
\begin{prop}(\cite{bayersuarez})\label{bayersuarez}
If $K/\mathbb{Q}$ is not ramified at $2$, then $\phi(\mathcal{O}_K)$ is a fractional ideal of $K'$. Moreover, the ideal $\mathcal{I} = \phi(\mathcal{O}_K)$ satisfies $\mathcal{I}^2 = \frac{1}{2}\mathcal{O}_{K'}$.

\end{prop}
We will use Proposition \ref{bayersuarez} to construct WR lattices from totally real fields $\Q(\zeta_{4m}+\zeta_{4m}^{-1})$, where $m>1$ is an odd integer.
\begin{thm}
\label{pp}
The field $\Q(\zeta_{4m} +\zeta_{4m}^{-1})$ contains a well-rounded lattice $\rho(\mathcal{I})$ for any odd integer $m>1$. Moreover, $\mathcal{I}^2=\frac{1}{2}\mathbb{Z}[\zeta_{4m} +\zeta_{4m}^{-1}]$. 

\end{thm}
\begin{proof}
Let $m>1$ be an odd integer. The field $K=\Q(\zeta_m)$ is a quadratic extension of $F=\Q(\zeta_{m} +\zeta_{m}^{-1})$. Using the same notation as above, the corresponding field $K'$ to $K$ is $K'=\Q(\zeta_{4m} +\zeta_{4m}^{-1})$.
Using Proposition \ref{Bayer}, we get that $\rho(\phi(\mathcal{O}_K))$ is an ideal lattice over $K'$ similar to $\rho(\mathcal{O}_K)$.
Finally, $\rho(\mathcal{I})$ is WR by Theorem \ref{fuksh}, where $\mathcal{I}=\phi(\mathcal{O}_K)$.
\end{proof}

\begin{rem} Note that Theorem \ref{pp} provides explicit constructions of WR lattices in dimensions $\varphi(m)$, where $\varphi$ is the Euler totient function and $m>1$ an odd integer. The WR lattices $D_4$ and $E_8$ used in the simulations in the next section are obtained as lattices over such real maximal subfields. 
\end{rem}

}

\section{Simulations}
\label{sec: sim}
    
    All simulations in this section assume the standard Rayleigh fading channel model; fast fading in the SISO case and quasi-static in the MIMO case. We will first compare well-rounded and non-well-rounded lattices to show how the first ones outperform the second ones. \cami{After that, we will show how, when considering only WR lattices, the choice of the best one among them seems to relate to the delicate problem of simultaneously maximizing the shortest norm and minimizing the kissing number. As there is no obvious tradeoffs between these two extremes provided by the densest sphere packing lattices and GWR lattices, we have resorted to a random search aiming at finding a good sublattice with large minimal norm and small kissing number. Note that using known construction methods for WR lattices can be challenging, since we start with a fixed lattice for Bob and as well as with a fixed sublattice index determining the security level. }
    
    When using coset coding, every message (coset) is assigned several codewords.
    The sender chooses a random codeword that represents the intended message and sends it. 
    From the decoding perspective, two codewords $\lambda, \mu$ in the lattice $\Lambda_b$ represent the same message if $\lambda-\mu \in \Lambda_e$, where $\Lambda_e \subset \Lambda_b$ is a sublattice. The choice of $\Lambda_e$ can heavily influence the information leakage to a possible eavesdropper. We use the \emph{Planewalker} sphere decoder implementation \cite{Planewalker} for our coset code simulations, with around $10^6$ channel realizations per SNR. The depicted SNR is the SNR observed by Eve. However, the same plot can be used for Bob's correct decision probability just by looking at relatively higher SNRs compared to Eve's, as we assume $\sigma_e^2>\sigma_b^2$. 
    
    \cami{Note also that at the  high SNR regime, the coset code plays little role and the performance is determined by the minimum product distance. Hence, for the choice of the best sublattice, the low-to-moderate SNR regime is more interesting. In this regime, our simulations indicate that the best sublattice is a non-orthogonal WR sublattice with a large minimal norm and small kissing number. The signal constellation shape also plays a role. }

\subsection{Theta series approximation}
    In order to test what we called the average flatness factor criterion, we need to make use of the theta function of the lattices that we want to compare. Unfortunately, these theta functions consist of infinite $q$-expansion, making it impossible to compute them in an exact form. For some well-known lattices, a closed formula for the theta function is known, but in general there is no way to evaluate effectively the theta function of a given lattice. Even for the known ones, they still consist of infinite series. 
    
    In \cite{Amaro_approx} the authors propose the following approximation, referred to here as ``Barreal approximation'', which allows us to evaluate the theta function of any given lattice, allowing us in turn to compute an approximation of the expected flatness factor of the lattices we want to compare. For the flatness factor computations (Fig. \ref{FF8dim}), we have used the approximation for all lattices. 
  
    \begin{prop}[\cite{Amaro_approx}, Thm 4]
        Let $\Lambda\in\mathbb{R}^n$ be a full lattice with volume $\vol(\Lambda)$ and minimal norm $\lambda=\lambda_1^2$. The theta series $\Theta_\Lambda(q)$, where $0\leq q < 1$ can be expressed as
        $$\Theta_\Lambda(q)=(1-q^{\lambda}) - \frac{\log(q)\lambda^{\frac{n}{2}+1}\pi^{\frac{n}{2}}}{\Gamma(\frac{n}{2}+1)\vol(\Lambda)}\int_1^\infty t^{\frac{n}{2}}q^{\lambda t}dt + \mathcal{E}(\Lambda,n,L,q),$$
        where
        $$\mathcal{E}(\Lambda,n,L,q)=-C(\Lambda,n,L)\log(q)\lambda\int_1^\infty t^{\frac{n-1}{2}}q^{\lambda t}dt,$$
        and the constant $C(n,\Lambda,L)$ depends on $n$, $\Lambda$ and a Lipschiz constant $L$.
    \end{prop}
    
    We will denote by $\widetilde{\Theta}_\Lambda(q)$ the approximation $\Theta_\Lambda(q)-\mathcal{E}(\Lambda,n,L,q)$.
    We can then approximate the flatness factor as 
    $$\varepsilon_\Lambda(\sigma^2)= \frac{\vol(\Lambda)}{(\sqrt{2\pi\sigma^2})^n}\Theta_\Lambda(e^{-\frac{1}{2\sigma^2}})-1\cami{\approx} \frac{\vol(\Lambda)}{(\sqrt{2\pi\sigma^2})^n} \left(\left(1-e^{-\frac{\lambda}{2\sigma^2}}\right) + \frac{(\lambda\pi)^{\frac{n}{2}}}{2\sigma^2\Gamma(\frac{n}{2}+1)\vol(\Lambda)} \int_1^\infty t^{\frac{n}{2}}e^{-\frac{\lambda t}{2\sigma^2}}dt\right).$$
    
   In Figure \ref{fig:thetas} we can see how good this approximation is for some lattices, comparing it with different truncations of their theta series $q$-expansions. For a more rigorous study, we refer to \cite{Amaro_approx}.
    
    \begin{figure}[ht]
	    \begin{center}
		\includegraphics[scale=0.28]{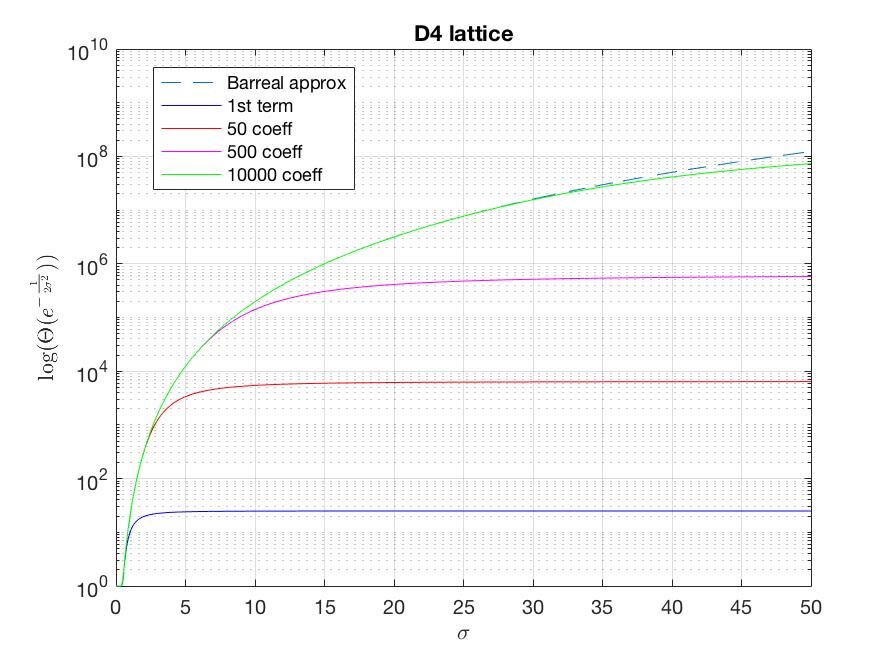}
		\includegraphics[scale=0.28]{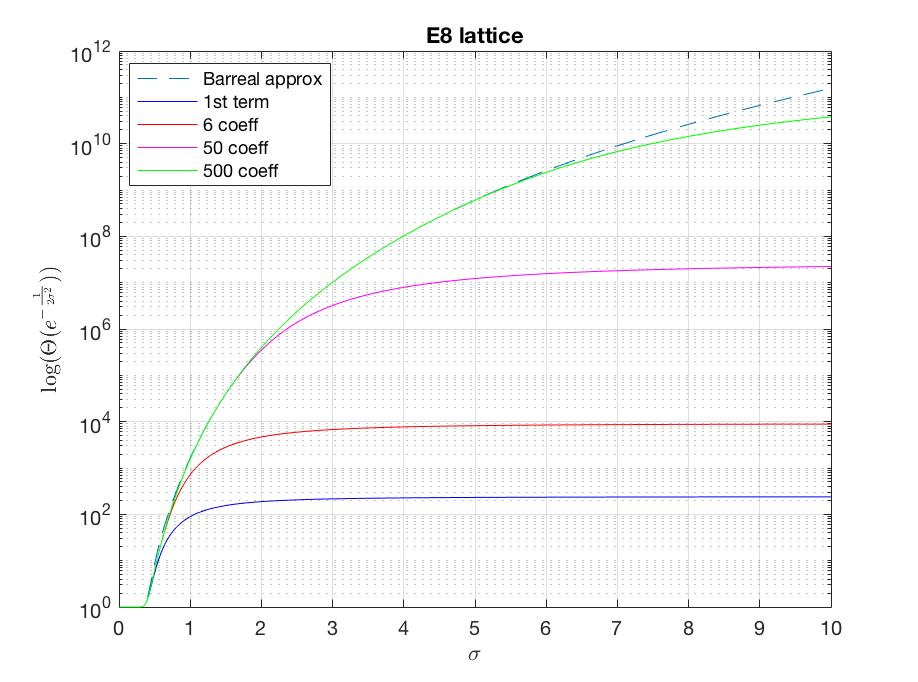}
		\includegraphics[scale=0.3]{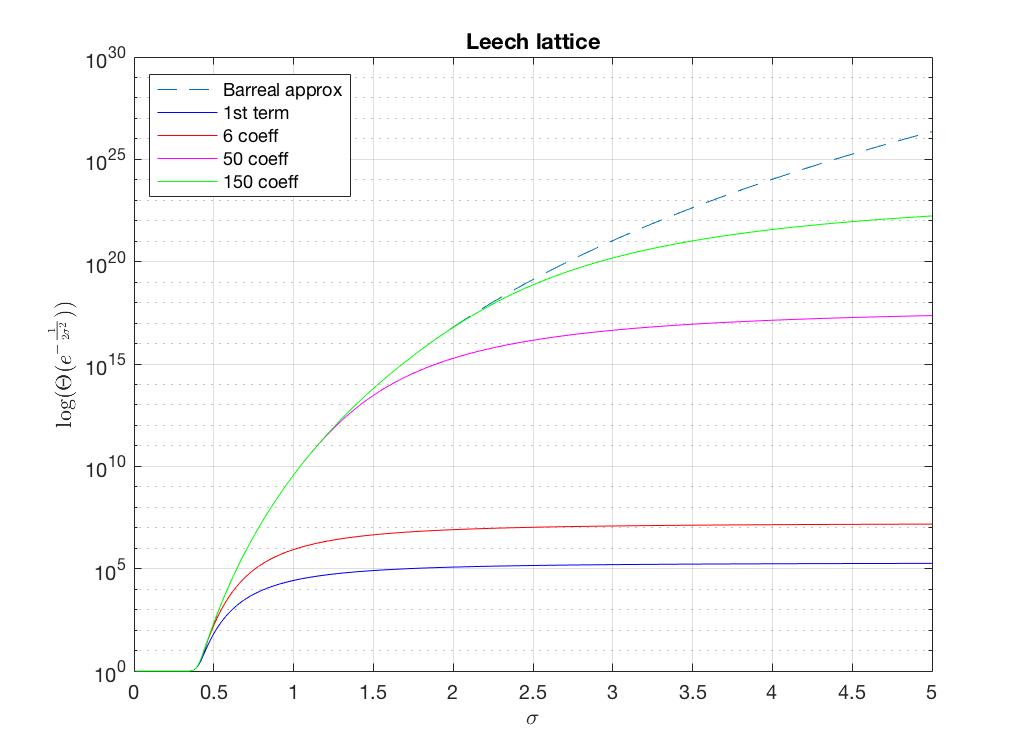}
		\caption{Comparison of the Barreal \emph{et al.} approximation \cite{Amaro_approx}     $\widetilde{\Theta}_{\Lambda}(e^{-\frac{1}{2\sigma^2}})$ with different truncated $q$-expansions of $\Theta_{\Lambda}(e^{-\frac{1}{2\sigma^2}})$.}
		\label{fig:thetas}
	\end{center}
	\end{figure}
     
\subsection{Well-rounded vs non-well-rounded lattices}
     
    For WR lattices, we first restrict the code lattices $\Lambda_b$ to orthonormal lattices, including: the Alamouti code, the Golden code and some full diversity algebraic rotations as given in the table \cite{Viterbo_tables}. These lattices yield good performance in the high SNR regime, their orthogonal structure simplifies calculations, and they are therefore realistic candidates for actual implementations. We found well-rounded sublattices $\Lambda_e$ of $\Z^n$ with prescribed index, and hence of the aforementioned orthonormal codes, by using a probabilistic search algorithm.  We compared these against scalings of $\Z^n$ and a non-well-rounded lattice in a fast-fading Rayleigh SISO channel as well as in quasi-static MIMO channels.

\subsubsection{SISO}
    For the SISO channel we simulate a coset coding protocol for a wiretap channel with the three following choices for $\Lambda_e$:
    {\begin{center} $
	    \Lambda_1=\begin{pmatrix}
	    16 &0 &0 &0\\
	    0&4&0&0\\
	    0&0& 2&0\\
	    0&0&0& 2
	    \end{pmatrix} \Z^4, \,\,
	    \Lambda_2=\begin{pmatrix}
	    4 &0 &0 &0\\
	    0&4&0&0\\
	    0&0& 4&0\\
	    0&0&0&4
	    \end{pmatrix} \Z^4, \,\,
	    \Lambda_3=\begin{pmatrix}
	    -2 &-3 &4 &-1\\
	    0&-1&0&3\\
	    0&-3& -2&-3\\
	    -4&-1&0&-1
	    \end{pmatrix} \Z^4
	    $
    \end{center}} \medskip
    
    Their parameters are compiled in Table \ref{tab:Z4params}. All sublattices have the same index, so we have the same data rate in all three cases. The non-well-rounded lattice has the lowest minimal norm, while $\Lambda_3$ almost achieves the theoretical maximum of $\gamma_4 256^{\frac{2}{4}}=16\sqrt{2}\approx 22.63$, where $\gamma_n$ is the $n$-th Hermite constant. The optimal packing in 4 dimensions is given by the $D_4$ lattice, but although it is a sublattice of $\Z^4$ no integer scaling of $D_4$ has the desired index $256$. 
    
     \begin{table}[ht]
	    \centering
	    \begin{tabular}{ccccccccc}
		    \toprule
		    Lattice	& $\lambda_1(\Lambda)^2$ & $\kappa$ & WR & index & $R_i$ & $R_c$ 
		    & WR dual \\
	    	\midrule
		    $\Lambda_1$ & 4  & 4 & no & 256 & 2 & 2 & no \\ 
		    $\Lambda_2$ & 16  & 8 & yes & 256 & 2 & 2 & yes \\ 
		    $\Lambda_3$ & 20  & 12 & yes & 256 & 2 & 2  & yes \\ 
	    \end{tabular}
	    \caption{Parameters for three different sublattices of $\Z^4$ with kissing number $\kappa$.}
	    \label{tab:Z4params}
    \end{table}
    
    The simulation results for the ECDP are depicted in Figure \ref{fig:SimZ4}. We see that the well-rounded lattices clearly outperform the non-well-rounded sublattice and $\Lambda_3$ approaches the theoretical minimum of $\frac{1}{256}$, which would be equivalent to Eve guessing uniformly at random among the 256 different messages.

    \begin{figure}[ht]
	    \begin{center}
		\includegraphics[scale=0.6]{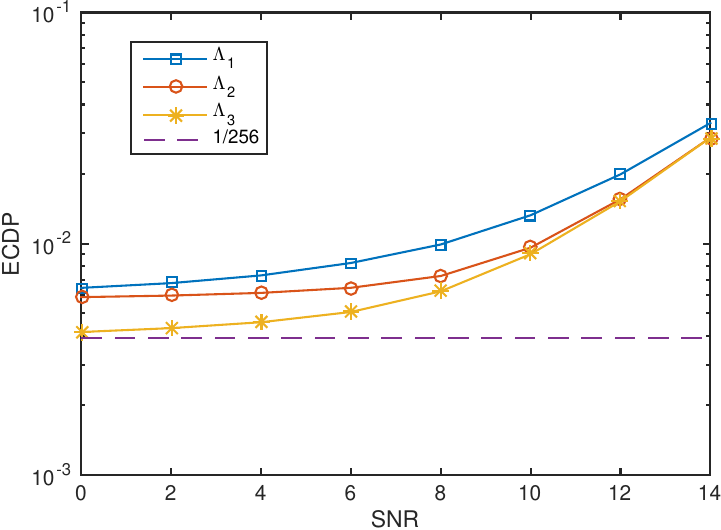}
		\caption{Simulation of ECDP for three different $\Z^4$ sublattices.}
		\label{fig:SimZ4}
	\end{center}
	\end{figure}

\subsubsection{MIMO}
    The Alamouti code is defined as a sublattice of $\C^4$ of rank 4. Since it is orthogonal it can be seen as an isomorphic image of $\Z^4$ and we can use the same sublattices $\Lambda_2$ and $\Lambda_3$. We take $\Lambda_4 := 2\Lambda_2$ and $\Lambda_5 := 2\Lambda_3$. Both lattices are well-rounded but the non-orthogonal choice, with longer minimal vectors, clearly outperforms the trivial choice of a scalar multiple of $\Z^4$ in the low SNR regime, as seen in Fig. \ref{fig:Alamouti-256}. Both sublattices behave identically for high SNR since the coset coding rarely comes into play in these situations.

    \begin{figure}[ht]
	    \centering
	    \includegraphics[scale=0.5]{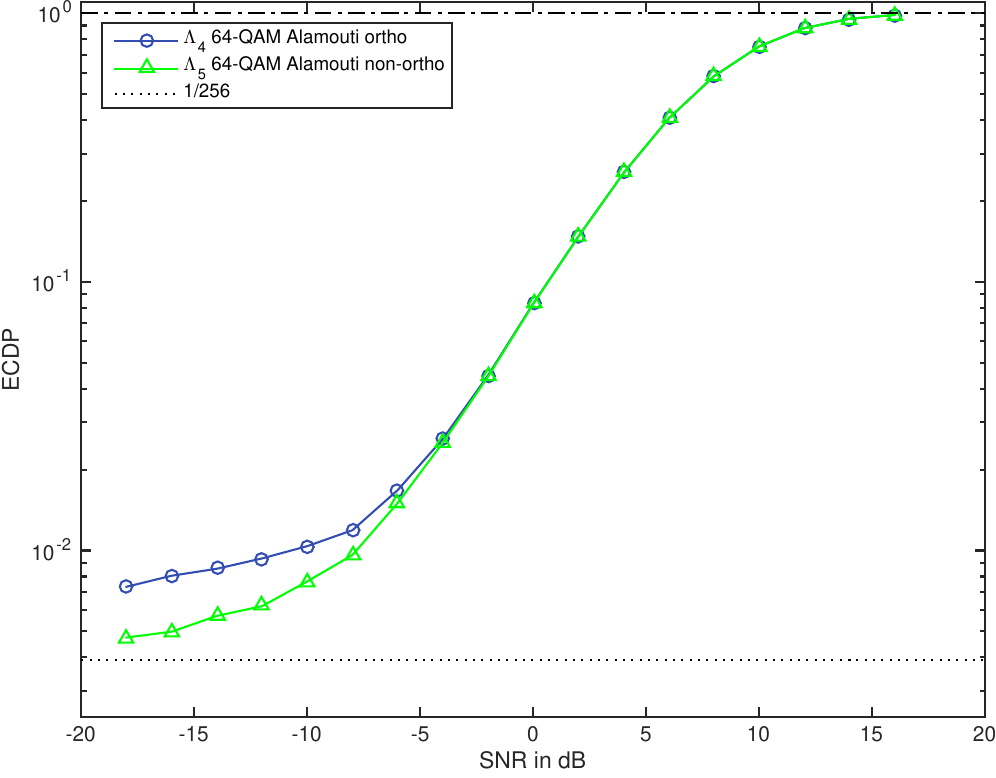}
    	\caption{ECDP for 64-QAM Alamouti code.}
    	\label{fig:Alamouti-256}
    \end{figure}

    For the Golden code we choose sublattices $\Lambda'_1, \Lambda'_2, \Lambda'_3$ of $\Z^8$ with the following respective generator matrices.
    \begin{align*} 
    \Lambda'_1 &= 2\cdot \diag(2,2,2,2,2,1,1,1), &
    \Lambda'_2 &= 2\cdot\begin{bsmallmatrix}
    1& 0& 1& 0& 0& 0& 1& 1\\
    0& 0& 0& 0& 1& -1& 0& -1\\
    1& 1& 0& 0& 0& -1& -1& 0\\
    0& -1& -1& 1& 0& 0& 0& 0\\
    0& 0& 0& 1& 0& 1& 0& -1\\
    -1& 0& 0& 1& 0& 0& 0& 0\\
    0& 1& -1& 0& 1& 0& 0& 0\\
    0& 0& 0& 0& -1& 0& 1& 0\\
    \end{bsmallmatrix},  &
    \Lambda'_3 &= 2\cdot \begin{bsmallmatrix} 
    -1& -1& -1& 1& 0& 0& 0& 0\\
    1& 0& 0& 1& 0& 0& 0& 0\\
    0& 0& 1& 1& 0& 0& 0& 0\\
    0& -1& 0& 0& -2& 0& 0& 0\\
    -1& -1& 0& 0& 0& 2& 0& 0\\
    -1& 0& 1& 0& 0& 0& 0& 2\\
    0& 1& 0& -1& 0& 0& 0& 0\\
    0& 0& -1& 0& 0& 0& -2& 0\\
    \end{bsmallmatrix}.
    \end{align*}

    All three lattices have index $32$ so we can encode $5$ bits in each message, or equivalently achieve a rate of $2.5$ bits per channel use. We vary the codebook size to allow for different rates of confusion $r_c$ as detailed in Table \ref{tab:sublatGold}.

			
    \begin{table}[ht]
	\setlength{\tabcolsep}{4pt}
	\begin{center}
		\begin{tabular}{c c c c c  | r r r | r r r | r r r}
			\toprule
			& & & & &
			\multicolumn{3}{|c|}{4-QAM} & \multicolumn{3}{c|}{16-QAM} & \multicolumn{3}{c}{64-QAM}\\
			& index	& $\lambda_1(\Lambda)^2$ & WR  & WR dual & $r$ & $r_i$ & $r_c$ & $r$ & $r_i$ & $r_c$ & $r$ & $r_i$ & $r_c$ \\ 
			\midrule
			$\Lambda'_1$ & 32  & 4 & no &  no & & & & & & & \\
			$\Lambda'_2$ & 32 & 12 & yes &  yes &  4 & 2.5 & 1.5 & 8 & 2.5 & 5.5 & 12 & 2.5 & 9.5 \\
			$\Lambda'_3$ & 32 & 16 & yes &  yes & & & & & & & \\
		\end{tabular}
		
		\caption{Sublattices of the Golden Code of index $32$}	\label{tab:sublatGold}
	\end{center}
    \end{table}
    
    The simulation results in Figure \ref{fig:Golden} clearly show that increasing the rate of confusion has strong implications for the ECDP. Increasing $r_c$ from $1.5$ bpcu to $5.5$ bpcu leads to an SNR shift of about 10 dB and increasing $r_c$ to $9.5$ bpcu increases the gap by another 8 dB. Both well-rounded lattices $\Lambda'_2$ and $\Lambda'_3$ perform very similarly despite their different minimal lengths.

    \begin{figure}[ht]
	    \centering
	    \includegraphics[scale=0.5]{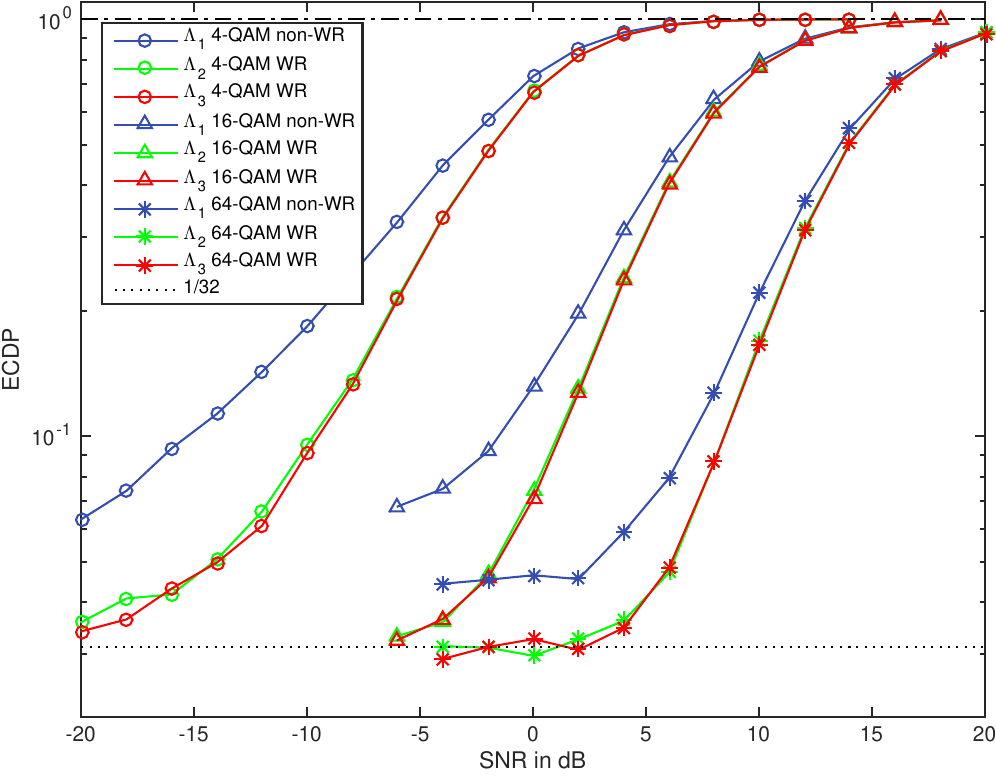}
	    \hspace{-1.0pt}
	    \caption{ECDP for 4-QAM, 16-QAM and 64-QAM Golden code.}
	    \label{fig:Golden}
    \end{figure}

\subsection{Choosing among well-rounded lattices}
    Once we have seen that well-rounded lattices perform better than non-well-rounded ones, and that among WR lattices, the non-orthogonal ones are better than the orthogonal ones, we are interested in choosing among well-rounded lattices. \cami{To choose the best one among WR lattices, it seems beneficial to have a large minimal norm and small kissing number.} We will illustrate this in some examples for dimensions $n=4$ and $n=8$.

\subsubsection{$n=4$}
    We consider WR sublattices of the optimal rotation of $\mathbb{Z}^4$ in terms of the minimum product distance proposed by Viterbo \emph{et al.} \cite{viterbo2,Viterbo_tables}. We refer to this rotated $\Z^4$ lattice by \textit{Krus4}.
    We will compare the sublattice having the best sphere packing shape to a randomly found non-orthogonal well-rounded sublattice of Krus4 (denoted by \textit{WR nonortho} in the table below), which is not isometric to the best sphere packing. 
    The basis matrix is given by
    $$\left(\begin{smallmatrix}2.6179140488&-2.414499458&-2.6174487992&0.6824076544\\
        -3.0009700976&-2.338015074&2.3164903194&0.4021649334\\
        3.4332588812&-1.6385571204&1.3956732198&-1.8920783058\\
        0.8223162707&-1.4725414834&3.5158030921&2.1896452076\\
    \end{smallmatrix}\right).$$
    For the generators of the rotation, see \cite{Viterbo_tables}.
    The choice of this sublattice is based on trying to simultaneously obtain a long shortest vector and a low kissing number, along the lines suggested previously.
    
    In addition to the rotated $\Z^4$ lattice, we will consider the algebraic construction of $D_4$ described in \cite{spawc}.
    In Table \ref{4dim} we depict the minimum product distance, shortest norm, kissing number, and the average energy for the lattices under comparison. We can see how the PAM signaling causes suboptimal average energy for the $D_4$ superlattices due to its skewness. It is also evident from the following simulations that the relationship between the shortest vector length and kissing number is an interesting topic for further study.

            
    \begin{table}[ht]
        \centering
        \vspace{0.15cm}
        \begin{tabular}{|c|l|c|c|c|c|}
            \hline
            Lattice   & $d_{p,\min}^{1/n}$ & $\lambda_1(\Lambda)^2$ &  $\kappa$ & $E_{av}$ \\ \hline \hline
            \textbf{Krus4} (Viterbo \cite{Viterbo_tables}) & 0.43899 & 1 & 8 & 84 \\ \hline
            4Krus4 & & 16  & 8 & \\ \hline
            WR nonortho & & 20  & 12 &  \\ \hline\hline
            $\mathbf{D_4}$ (Costa \cite{CostaDn}) & 0.33856 & 1.41421 &  24 & 118  \\ \hline
            $\mathrm{4D_4}$ & & 22.62735  & 24 & \\ \hline
        \end{tabular}
        
        \hspace{1cm}
        \caption{Viterbo et al. $\Z^4$ algebraic rotation vs Costa et al. $D_4$ algebraic rotation. Superlattices \textbf{boldfaced} and normalized to unit volume. $E_{av}$ stands for the average energy of the overall codebook with $8$-PAM.}
        \label{4dim}
    \end{table}
    
    Fig. \ref{4dim16pam} shows the performance of different 4-dimensional lattices using 16-PAM. The results with 8-PAM are very similar, except that the crossing point of the curves is slightly above $\ECDP=10^{-2}$. 
    \begin{figure}[ht]
        \centering
        \includegraphics[width=9cm]{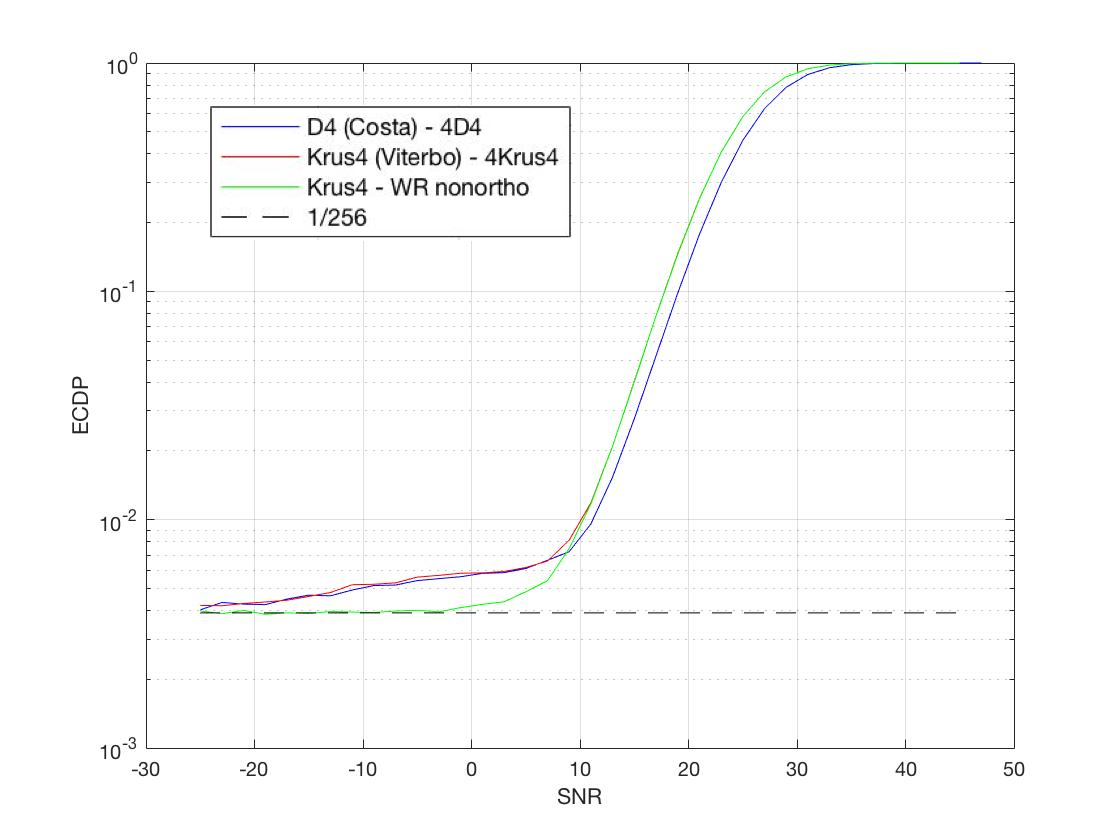}
        \caption{Comparison of 4-dimensional WR lattices with 16-PAM.  Sublattice index 256.}
        \label{4dim16pam}
    \end{figure}

\subsubsection{$n=8$}
    Analogously to the dimension 4 case, we consider WR sublattices of the optimal rotations of $\mathbb{Z}^8$ in terms of the minimum product distance proposed in \cite{viterbo2,Viterbo_tables}. We refer to this rotated $\Z^8$ lattice by \textit{Cyclo8}.
    The \textit{Cyclo8} lattice has a rotated $2E_8$ as a sublattice since $2E_8\subseteq \mathbb{Z}^8$. We will compare the sublattice having the best sphere packing shape to randomly found non-orthogonal well-rounded sublattices of \textit{Cyclo8} (denoted by \textit{WR nonortho} in the table below), which is not isometric to the best sphere packing. 
    The basis matrix is given by
    $$\left(\begin{smallmatrix}
        0.7007035387&-1.8196532242&0.5098308731&0.5436478577& -0.354502581&1.0846703755&-0.47456554011&-0.3391191908 \\
        -0.1711086948&-1.2014038879&0.2876791984&-0.8163368876&0.70784566949&-0.3945408211&-0.8973956636&1.5218930146 \\
        0.40388427868&-0.3992399381&0.359383021&-0.3556171703& -0.8543195887&-0.4910135947&-1.6941162987&-1.2573300636 \\
        0.5710190406&1.817991942&0.7541367758&0.607964929 & 0.1228109789&-0.9716075716&-0.13023119099&-0.6741179039 \\
        -0.84701159571&-1.0127325492&0.8324416454&0.0586435438 &-0.09880446362&-1.7873172339&-0.5616527529&0.2020567981 \\
        -0.04341452009&-0.3892212211&1.5903470277&-0.8144416018 & 0.8859851713&-0.084615344&1.3643633956&-0.022051583\\
        1.70075388399&-0.6502044959&0.3252620337&-1.1219451349 & -0.89655915559&-0.3737755366&0.10194674011&-0.6051400279\\
        -1.5754499665&0.8921488186&0.5880215656&-0.3340078386 &  -1.0081995324&1.0730373751&0.0358312991&-0.3090920604\\
    \end{smallmatrix}\right)$$

    We will also consider two different constructions of $E_8$ described in \cite{spawc}.
    In Table \ref{8dim} we depict the minimum product distance, shortest norm, kissing number, and the average energy for the lattices under comparison. We can see how the PAM signaling causes suboptimal average energy for $E_8$ due to its skewness. As before, one can see from the following simulations that the relationship between the shortest vector length and kissing number is nontrivial.

      
    \begin{table}[ht]
        \centering
        \begin{tabular}{|c|c|c|c|c|c|}
            \hline
            Lattice  & $d_{p,\min}^{1/n}$ & $\lambda_1(\Lambda)^2$ &  $\kappa$ & $E_{av}$ \\ \hline\hline
            \textbf{Cyclo8} (Viterbo \cite{viterbo2}) & 0.289520 & 1  & $16$ & $40$ \\ \hline
            2Cyclo8 & & 4  & 16 & \\ \hline
            WR nonortho & &  6  & 40 &   \\ \hline
            $\mathrm{2E_8}$ & & 8 &  240 & \\ \hline \hline
            $\mathbf{E_8}$ (Costa \cite{Jorge2}) & 0.293826  & 2 &  240 & 80 \\ \hline
            $\mathrm{2E_8}$ & & 8 &  240 & \\ \hline
        \end{tabular}
        
        \hspace{1cm}
        \caption{Viterbo et al. $\Z^8$ algebraic rotation  vs Costa et al. $E_8$ algebraic rotation. Superlattices \textbf{boldfaced} and normalized to unit volume.  $E_{av}$ stands for the average energy of the overall codebook with $4$-PAM.}
        \label{8dim}
    \end{table}

    Fig. \ref{FF8dim} displays the ECDP upper bounds in terms of the expected flatness factors (EFF) as explained in the previous section, computed by using the approximation from \cite{Amaro_approx}. We can see that for 8-dimensional WR lattices (behavior of 4-dimensional lattices is very similar) the curves almost coincide, and the small differences do not necessarily give the performance order observed in the simulations that follow. Hence, choosing the  lattice with the smallest EFF alone may not yield the desired outcome. Furthermore, even though the differences in the EFFs of the lattices are negligible, the simulations show a very clear difference in the actual ECDP performance, even beyond 10dB depending on the SNR range.
    
    \begin{figure}[ht]
        \centering
        \includegraphics[width=11cm]{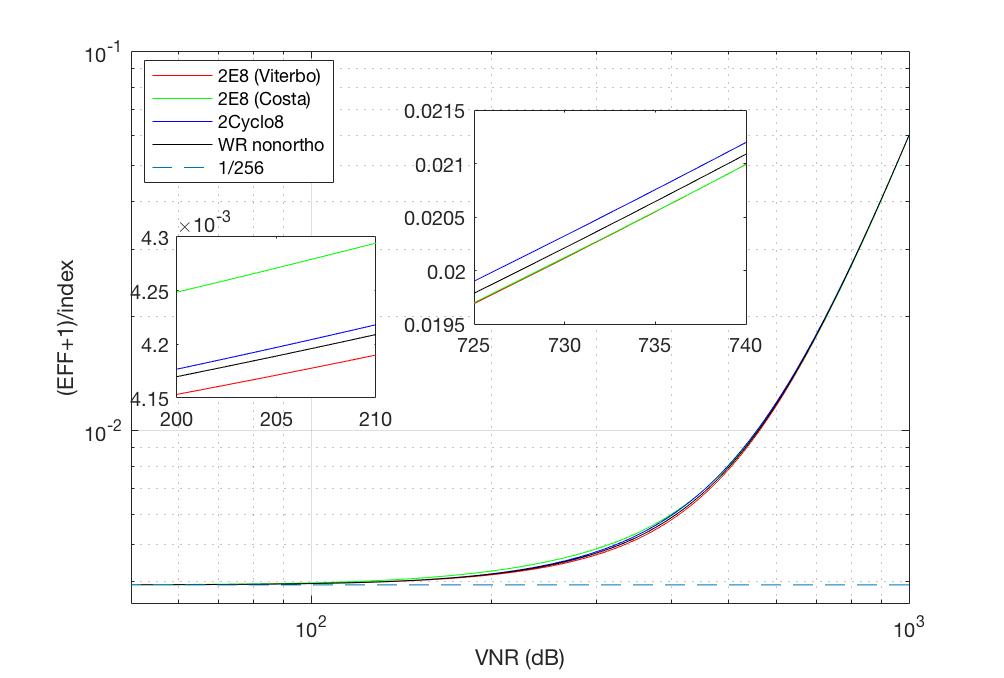}
        \caption{Comparison of the ECDP upper bounds $(\mathrm{EFF}+1)/\mathrm{index}$ for 8-dimensional WR sublattices with respect to the volume-to-noise ratio (VNR).}
        \label{FF8dim}
    \end{figure}
     \begin{figure}[ht]
        \centering
        \includegraphics[width=12cm]{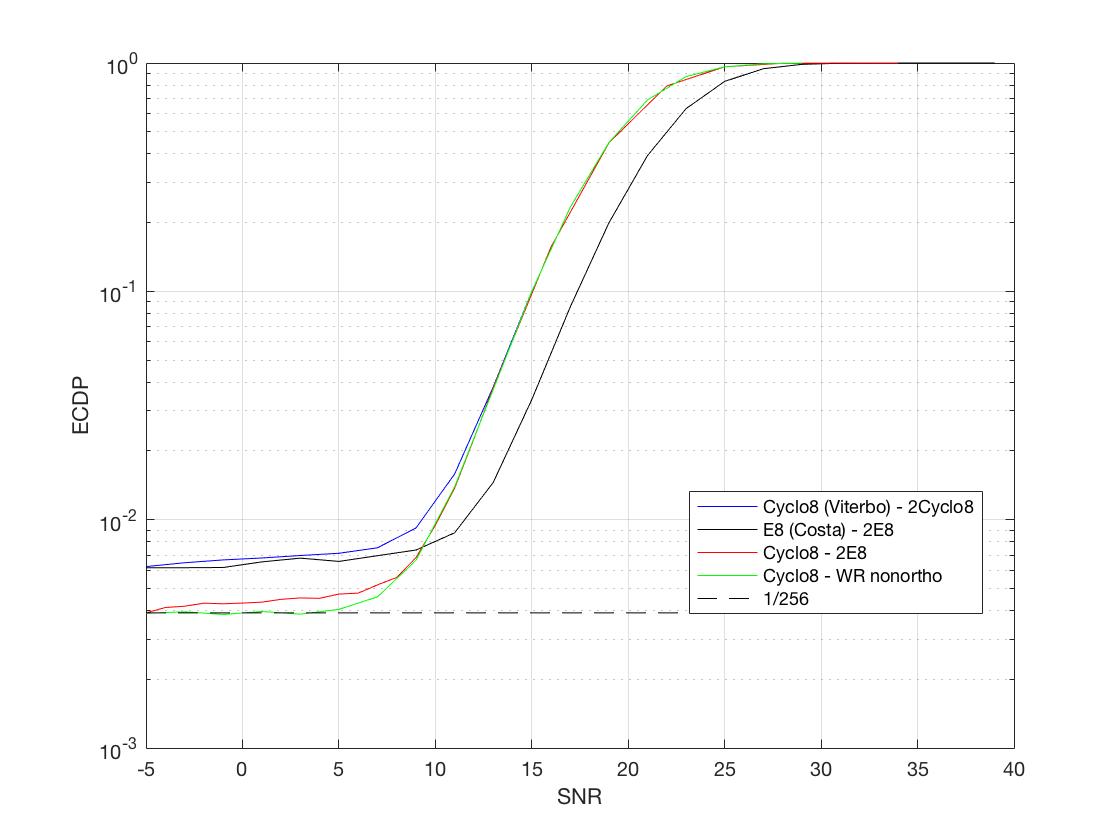}
        \caption{Comparison of 8-dimensional WR lattices with 8-PAM. Sublattice index 256.}
        \label{8dim8pam}
    \end{figure}
    
    In Fig. \ref{8dim8pam} different 8-dimensional lattices are compared with 8-PAM. Again, the results with 4-PAM are very similar, except that all the curves ultimately get to the asymptote, and the crossing point is slightly higher, just above $\ECDP=10^{-2}$.
    Both in dimension 4 and 8, the well-rounded non-orthogonal lattice outperforms the other lattices in the regime close to the asymptote corresponding to uniform guessing. The curves cross at higher SNRs, after which the Costa $D_4, E_8$ constructions become better,  so the choice of the best lattice also  depends on the target ECDP and Bob's expected SNR. \cami{Note, however, that at higher SNRs lowering Eve's decision probability may also worsen Bob's. Ideally, Bob's and Eve's SNRs would be located at the different ends of the ``waterfall'' region. In all the simulations, if this is the case then the best choice is uniquely the same from both ECDP and Bob's error probability point of view: the one minimizing ECDP at low SNR is the same as the one minimizing Bob's error probability at high SNR.  }

     We also remark again that using PAM signaling for the non-orthogonal superlattices is suboptimal (more so than for orthogonal lattices) in terms of average transmission power.  To this end, we also ran simulations with spherical shaping, depicted in Fig. \ref{8dim8pam_sphere}. The differences get a lot smaller, and the crossing point seems to disappear. In the interesting low SNR regime, the WR nonorthogonal sublattice still outperforms the others. Furthermore, spherical encoding and decoding are much more complex than that of cubically shaped (\emph{i.e.}, with a symmetric PAM coefficient set) constellations.  It also makes bit labeling a delicate problem.

    \begin{figure}[ht]
        \centering
        \includegraphics[width=12cm]{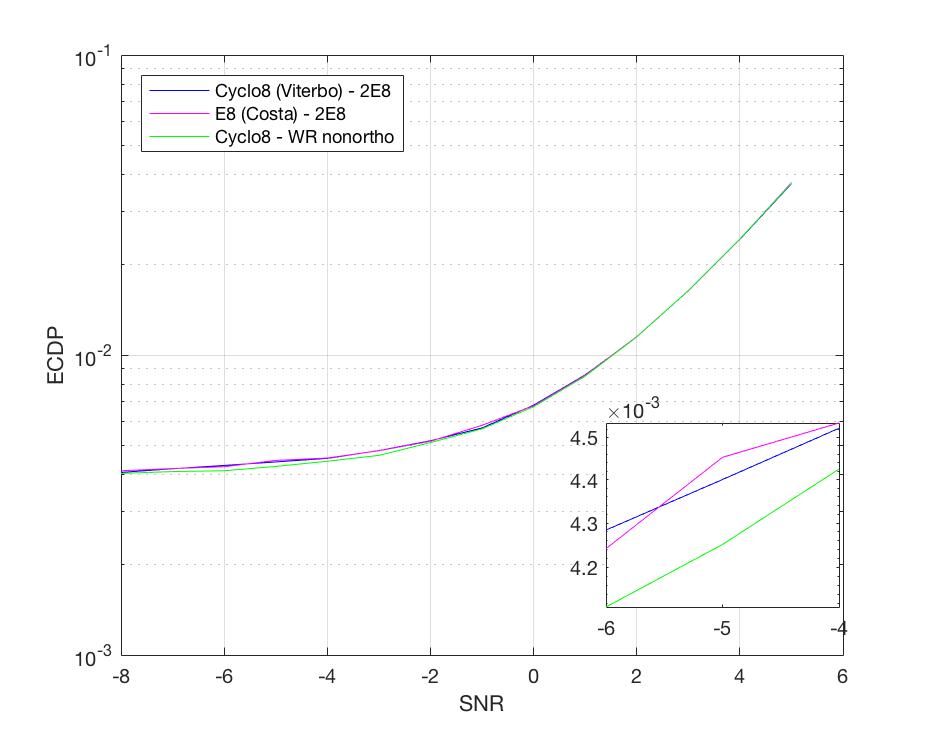}
        \caption{Comparison of 8-dimensional WR lattices with spherical shaping; $2^{16}$ shortest vectors  chosen out  of the $2^{24}$ vectors gotten from 8-PAM. Sublattice index 256.}
        \label{8dim8pam_sphere}
    \end{figure}

\section{Conclusions and future work}
In this paper, we studied how to constructively design lattices for a fixed dimension with the goal of minimizing the expected lattice flatness factor, which is crucial for minimizing both the eavesdropper's correct decision probability and leaked information. While the flatness factor is closely related to the lattice theta series, which is very difficult to analyze, we showed that the optimization problem can be restricted to the family of well-rounded lattices. Well-rounded lattices can be constructed by using, \emph{e.g.}, algebraic methods or random search, and hence provide an efficient means to produce well-performing wiretap codes. It was also pointed out that some previous, asymptotic design criteria can also be met by WR lattices. Namely, the minimum product distance and minimum determinant maximization problems can be restricted to WR lattices without loss of generality. Our extensive simulations in SISO and MIMO channels show that WR lattices outperform non-WR ones, and moreover demonstrate that the flatness factor is not sufficient alone for choosing the best lattice performance-wise.  \cami{Hence, in conclusion, a good general strategy combining the results of the present and previous works is to choose the eavesdropper's sublattice to be non-orthogonal and WR with large minimal norm, small kissing number, and with large non-vanishing minimum product distance/minimum determinant for both the lattice and its dual. }

As future work, it is still open how to optimally choose among the WR lattices in order to achieve the best performance. Deriving explicit constructions for different dimensions will aid studying this question, as well as provide a repository for good wiretap lattices. \cami{The notion of generic well-rounded lattices seems also interesting as a smaller kissing number seems beneficial at least for low SNRs. Optimizing the shortest norm among GWR lattices may turn out to provide a solution to the first problem. Another interesting and most likely hard  question is to characterize a tradeoff between the shortest norm and the kissing number among WR lattices.}

\section*{Acknowledgements}

D.\ Karpuk was supported by Academy of Finland grant \#268364. C.\ Hollanti was supported by Academy of Finland grants \#276031, \#282938,  \#283262, and \#303819, and by the Technical University of Munich -- Institute for Advanced Study, funded by the German Excellence Initiative and the EU 7th Framework Programme under grant agreement  \#291763, via a Hans Fischer Fellowship. The support from the Finnish Cultural Foundation to O.~Gnilke and from the European Science Foundation under the ESF COST Action IC1104 to the ANTA Group are
also gratefully acknowledged. We thank L. Luzzi, R. Vehkalahti, and C. Ling for bringing their  results \cite{luzzi_isit16} to our attention.



\begin{thebibliography}{10}
\providecommand{\url}[1]{#1}
\csname url@samestyle\endcsname
\providecommand{\newblock}{\relax}
\providecommand{\bibinfo}[2]{#2}
\providecommand{\BIBentrySTDinterwordspacing}{\spaceskip=0pt\relax}
\providecommand{\BIBentryALTinterwordstretchfactor}{4}
\providecommand{\BIBentryALTinterwordspacing}{\spaceskip=\fontdimen2\font plus
\BIBentryALTinterwordstretchfactor\fontdimen3\font minus
  \fontdimen4\font\relax}
\providecommand{\BIBforeignlanguage}[2]{{%
\expandafter\ifx\csname l@#1\endcsname\relax
\typeout{** WARNING: IEEEtran.bst: No hyphenation pattern has been}%
\typeout{** loaded for the language `#1'. Using the pattern for}%
\typeout{** the default language instead.}%
\else
\language=\csname l@#1\endcsname
\fi
#2}}
\providecommand{\BIBdecl}{\relax}
\BIBdecl

\bibitem{Gnilke}
O.~W. Gnilke, H.~T.~N. Tran, A.~Karrila, and C.~Hollanti, ``Well-rounded
  lattices for reliability and security in {R}ayleigh fading {SISO} channels,''
  in \emph{Proc. {IEEE} Inf. Theory Workshop}, 2016, pp. 359--363.

\bibitem{Barreal-Karrila-Karpuk-Hollanti}
A.~Barreal, A.~Karrila, D.~A. Karpuk, and C.~Hollanti, ``Information bounds and
  flatness factor approximation for fading wiretap {MIMO} channels,'' in
  \emph{Proc. {IEEE} Int. Telecommunication Netw. Appl. Conf.}, 2016, pp.
  277--282.

\bibitem{Gnilke-Barreal}
O.~W. Gnilke, A.~Barreal, A.~Karrila, H.~T.~N. Tran, D.~A. Karpuk, and
  C.~Hollanti, ``Well-rounded lattices for coset coding in {MIMO} wiretap
  channels,'' in \emph{Proc. {IEEE} Int. Telecommunication Netw. Appl. Conf.},
  2016, pp. 289--294.

\bibitem{spawc}
M.~T. Damir, O.~Gnilke, L.~Amor{\'o}s, and C.~Hollanti, ``Analysis of some
  well-rounded lattices in wiretap channels,'' in \emph{Proc. IEEE Int.
  Workshop on Signal Process. Adv. in Wireless Commun.}, 2018, pp. 1--5.

\bibitem{wiretap-old}
A.~Karrila, D.~Karpuk, and C.~Hollanti, ``On analytical and geometric lattice
  design criteria for wiretap coset codes,'' \emph{arXiv preprint
  arXiv:1609.07723v2}, 2016.

\bibitem{Wyner}
A.~D. Wyner, ``The wiretap channel,'' \emph{Bell system technical journal},
  vol.~54, no.~8, pp. 1355--1387, 1975.

\bibitem{Wyner-Ozarow}
L.~H. Ozarow and A.~D. Wyner, ``Wire-tap channel {II},'' \emph{AT\&T Bell
  Laboratories technical journal}, vol.~63, no.~10, pp. 2135--2157, 1984.

\bibitem{Oggier-Sole-Belfiore}
F.~{Oggier}, P.~{Sol\'e}, and J.~{Belfiore}, ``Lattice codes for the wiretap
  {G}aussian channel: Construction and analysis,'' \emph{IEEE Transactions on
  Information Theory}, vol.~62, no.~10, pp. 5690--5708, 2016.

\bibitem{Viterbo-flatness}
C.~Ling, L.~Luzzi, J.-C. Belfiore, and D.~Stehl{\'e}, ``Semantically secure
  lattice codes for the {G}aussian wiretap channel,'' \emph{{IEEE} Trans. Inf.
  Theory}, vol.~60, no.~10, pp. 6399--6416, 2014.

\bibitem{luzzi}
L.~Luzzi, R.~Vehkalahti, and C.~Ling, ``Almost universal codes for {MIMO}
  wiretap channels,'' \emph{{IEEE} Trans. Inf. Theory}, vol.~64, no.~11, pp.
  7218--7241, 2018.

\bibitem{Ling-Belfiore}
C.~Ling and J.-C. Belfiore, ``Achieving {AWGN} channel capacity with lattice
  {G}aussian coding,'' \emph{{IEEE} Trans. Inf. Theory}, vol.~60, no.~10, pp.
  5918--5929, 2014.

\bibitem{Campello-Dadush-Ling}
A.~Campello, D.~Dadush, and C.~Ling, ``{AWGN}-goodness is enough:
  {C}apacity-achieving lattice codes based on dithered probabilistic shaping,''
  \emph{{IEEE} Trans. Inf. Theory}, vol.~65, no.~3, pp. 1961--1971, 2018.

\bibitem{polar-lattices}
L.~Liu, Y.~Yan, and C.~Ling, ``Achieving secrecy capacity of the {G}aussian
  wiretap channel with polar lattices,'' \emph{{IEEE} Trans. Inf. Theory},
  vol.~64, no.~3, pp. 1647--1665, 2018.

\bibitem{Belfiore-Oggier}
J.-C. Belfiore and F.~Oggier, ``Lattice code design for the {R}ayleigh fading
  wiretap channel,'' in \emph{IEEE Int. Conf. on Commun. Workshops}, 2011, pp.
  1--5.

\bibitem{belfiore_mimo}
------, ``An error probability approach to {MIMO} wiretap channels,''
  \emph{{IEEE} Trans. Commun.}, vol.~61, no.~8, pp. 3396--3403, 2013.

\bibitem{mirghasemi}
H.~Mirghasemi and J.-C. Belfiore, ``Lattice code design criterion for {MIMO}
  wiretap channels,'' in \emph{Proc. {IEEE} Inf. Theory Workshop}, 2015, pp.
  277--281.

\bibitem{INS_me}
A.-M. Ernvall, D.~Karpuk, C.~Hollanti, and E.~Viterbo, ``Probability estimates
  for fading and wiretap channels from ideal class zeta functions,''
  \emph{Advances in Mathematics of Communications}, vol.~9, no.~4, pp.
  391--413, 2015.

\bibitem{INS_roope}
R.~Vehkalahti, H.~feng Lu, and L.~Luzzi, ``Inverse determinant sums and
  connections between fading channel information theory and algebra,''
  \emph{IEEE Transactions on Information Theory}, vol.~59, no.~9, pp.
  6060--6082, 2013.

\bibitem{Voronoi}
G.~Voronoi, ``Nouvelles applications des param\'etres continus \`{a} th\'eorie
  des formes quadratiques: 1. {Sur} quelques propri\'et\'es des formes
  quadratiques parfait,'' \emph{J. reine angew. Mat}, vol. 133, pp. 97--178,
  1908.

\bibitem{mcmullen}
C.~McMullen, ``Minkowski?s conjecture, well-rounded lattices and topological
  dimension,'' \emph{Journal of the American Mathematical Society}, vol.~18,
  no.~3, pp. 711--734, 2005.

\bibitem{ash}
A.~Ash, ``Small-dimensional classifying spaces for arithmetic subgroups of
  general linear groups,'' \emph{Duke Mathematical Journal}, vol.~51, no.~2,
  pp. 459--468, 1984.

\bibitem{solan2019stable}
O.~N. Solan, ``Stable and well-rounded lattices in diagonal orbits,''
  \emph{Israel Journal of Mathematics}, vol. 234, no.~2, pp. 501--519, 2019.

\bibitem{TaoLenny}
M.~T. Damir and L.~Fukshansky, ``Canonical basis twists of ideal lattices from
  real quadratic number fields,'' \emph{Houston Journal of Mathematics},
  vol.~45, pp. 999--1019, 2019.

\bibitem{micciancio}
D.~Micciancio, ``Generalized compact knapsacks, cyclic lattices, and efficient
  one-way functions from worst-case complexity assumptions,'' in \emph{The 43rd
  Annual IEEE Symposium on Foundations of Computer Science, 2002.
  Proceedings.}\hskip 1em plus 0.5em minus 0.4em\relax IEEE, 2002, pp.
  356--365.

\bibitem{luzzi_isit16}
L.~Luzzi, R.~Vehkalahti, and C.~Ling, ``Almost universal codes for fading
  wiretap channels,'' in \emph{IEEE Int. Symp. Inf. Theory}, 2016.

\bibitem{costawell}
R.~R. de~Araujo and S.~I. Costa, ``Well-rounded algebraic lattices in odd prime
  dimension,'' \emph{Archiv der Mathematik}, vol. 112, no.~2, pp. 139--148,
  2019.

\bibitem{damir2020bases}
M.~T. Damir and G.~Mantilla-Soler, ``Bases of minimal vectors in {L}agrangian
  lattices,'' \emph{arXiv preprint arXiv:2006.16794}, 2020.

\bibitem{Martinet_perfect}
J.~Martinet, \emph{Perfect Lattices in {Euclidean} Spaces}.\hskip 1em plus
  0.5em minus 0.4em\relax Springer, Berlin, 2010.

\bibitem{Schurmann}
A.~Schurmann, \emph{Computational geometry of positive definite quadratic
  forms: Polyhedral reduction theories, algorithms, and applications}.\hskip
  1em plus 0.5em minus 0.4em\relax American Mathematical Soc., 2009, vol.~48.

\bibitem{Shamai}
Y.~Liang, H.~V. Poor, S.~Shamai \emph{et~al.}, ``Information theoretic
  security,'' \emph{Foundations and Trends{\textregistered} in Communications
  and Information Theory}, vol.~5, no. 4--5, pp. 355--580, 2009.

\bibitem{Sloane}
J.~H. Conway and N.~J.~A. Sloane, \emph{Sphere packings, lattices and
  groups}.\hskip 1em plus 0.5em minus 0.4em\relax Springer Science \& Business
  Media, 2013, vol. 290.

\bibitem{Blichfield}
H.~Blichfeldt, ``The minimum values of positive quadratic forms in six, seven
  and eight variables,'' \emph{Mathematische Zeitschrift}, vol.~39, no.~1, pp.
  1--15, 1935.

\bibitem{Kumar}
H.~Cohn and A.~Kumar, ``Optimality and uniqueness of the {L}eech lattice among
  lattices,'' \emph{arXiv preprint math/0403263}, 2004.

\bibitem{PhongNguyen}
P.~Q. Nguyen and B.~Vall{\'e}e, \emph{The {LLL} algorithm}.\hskip 1em plus
  0.5em minus 0.4em\relax Springer, 2010.

\bibitem{Martinetbasis}
J.~Martinet, ``Bases of minimal vectors in lattices, {I},'' \emph{Archiv der
  Mathematik}, vol.~89, no.~5, pp. 404--410, 2007.

\bibitem{Viterbo}
F.~Oggier, E.~Viterbo \emph{et~al.}, ``Algebraic number theory and code design
  for {R}ayleigh fading channels,'' \emph{Foundations and
  Trends{\textregistered} in Communications and Information Theory}, vol.~1,
  no.~3, pp. 333--415, 2004.

\bibitem{Oggier_slow}
J.~Lu, J.~Harshan, and F.~Oggier, ``Performance of lattice coset codes on a
  {USRP} {T}estbed,'' \emph{arXiv preprint arXiV:1607.07163}, 2016.

\bibitem{Karrila-Hollanti}
A.~Karrila and C.~Hollanti, ``A comparison of skewed and orthogonal lattices in
  {G}aussian wiretap channels,'' in \emph{Proc. {IEEE} Inf. Theory Workshop},
  2015, pp. 1--5.

\bibitem{Amaro_approx}
A.~Barreal, M.~T. Damir, R.~Freij-Hollanti, and C.~Hollanti, ``An approximation
  of theta functions with applications to communications,'' \emph{arXiv
  preprint arXiv:1601.05596v3}, 2016/2020.

\bibitem{Chua}
K.~Chua, ``The root lattice ${A}_{n}^{\ast}$ and {R}amanujan?s circular
  summation of theta functions,'' \emph{Proceedings of the American
  Mathematical Society}, vol. 130, no.~1, pp. 1--8, 2002.

\bibitem{Peikert}
C.~Peikert, ``An efficient and parallel {G}aussian sampler for lattices,'' in
  \emph{Annual Cryptology Conference}, 2010, pp. 80--97.

\bibitem{Sarnak-Strombergsson}
P.~Sarnak and A.~Str{\"o}mbergsson, ``Minima of {E}pstein?s zeta function and
  heights of flat tori,'' \emph{Inventiones mathematicae}, vol. 165, no.~1, pp.
  115--151, 2006.

\bibitem{CostaDn}
G.~C. Jorge, A.~J. Ferrari, and S.~I. Costa, ``Rotated ${D}_n$-lattices,''
  \emph{Journal of Number Theory}, vol. 132, no.~11, pp. 2397--2406, 2012.

\bibitem{Jorge2}
G.~C. Jorge, A.~A. de~Andrade, S.~I. Costa, and J.~E. Strapasson, ``Algebraic
  constructions of densest lattices,'' \emph{Journal of Algebra}, vol. 429, pp.
  218--235, 2015.

\bibitem{coulangeon}
R.~Coulangeon, ``Spherical designs and zeta functions of lattices,''
  \emph{International Mathematics Research Notices}, vol. 2006, 2006.

\bibitem{bachoc}
C.~Bachoc, ``Designs, groups and lattices,'' \emph{Journal de th{\'e}orie des
  nombres de Bordeaux}, vol.~17, no.~1, pp. 25--44, 2005.

\bibitem{terras}
A.~Terras, ``The minima of quadratic forms and the behavior of {Epstein and
  Dedekind} zeta functions,'' \emph{Journal of Number Theory}, vol.~12, no.~2,
  pp. 258--272, 1980.

\bibitem{delone}
B.~N. Delone and S.~S. Ryshkov, ``A contribution to the theory of the extrema
  of a multi-dimensional $\zeta$-function,'' in \emph{Doklady Akademii Nauk},
  vol. 173, no.~5, 1967, pp. 991--994.

\bibitem{bayerNF}
E.~Bayer-Fluckiger, ``Lattices and number fields,'' \emph{Contemporary
  Mathematics}, 1999.

\bibitem{neukirch2013algebraic}
J.~Neukirch, \emph{Algebraic number theory}.\hskip 1em plus 0.5em minus
  0.4em\relax Springer Science \& Business Media, 2013, vol. 322.

\bibitem{fukshanskypet}
L.~Fukshansky and K.~Petersen, ``On well-rounded ideal lattices,''
  \emph{International Journal of Number Theory}, vol.~8, no.~01, pp. 189--206,
  2012.

\bibitem{bayersuarez}
E.~Bayer-Fluckiger and I.~Suarez, ``Ideal lattices over totally real number
  fields and {Euclidean} minima,'' \emph{Archiv der Mathematik}, vol.~86, pp.
  217--225, 2006.

\bibitem{Planewalker}
P.~Pyrr\"{o}, O.~Gnilke, C.~Hollanti, and M.~Greferath, ``Planewalker sphere
  decoder implementation,''
  \url{https://version.aalto.fi/gitlab/pasi.pyrro/sphere-decoder/tree/SPAWC},
  2018, [Online].

\bibitem{Viterbo_tables}
E.~Viterbo and F.~Oggier, ``Full diversity rotations,'' \url{
  http://www.ecse.monash.edu.au/staff/eviterbo/rotations/rotations.html}, 2005,
  [Online].

\bibitem{viterbo2}
F.~Oggier, E.~Bayer-Fluckiger, and E.~Viterbo, ``New algebraic constructions of
  rotated cubic lattice constellations for the {R}ayleigh fading channel,'' in
  \emph{Proc. {IEEE} Inf. Theory Workshop}, 2003, pp. 263--266.

\end{thebibliography}

\section*{Appendix}

\cami{
\subsection{Proof of the convergence of the series in \eqref{eq: def of psi series}}\label{convergence19}
Let $\Lambda$ be a lattice in $\R^n$. We consider 
\begin{align*}
f:\R^n&\rightarrow\R\\
t=(t_1,\dots,t_n)&\mapsto \prod_{i=1}^{n}\frac{1}{(1+t_i^2\frac{\sigma_h^2}{\sigma^2})^{3/2}}.
\end{align*}
In order to prove the convergence of the series in \eqref{eq: def of psi series}, we will compare $\sum_{t \in \Lambda} \vol(\Lambda) f(t)$ and   $\int_{x \in \R^n} f(x) dx$, using the observation $$\int_{x \in \R^n} f(x) dx=\sum_{t \in \Lambda} \int_{x \in V(t)} f(x) dx,$$
where $V(t)$ is the Voronoi cell around $t$. 
By explicit differentiation, the length of the logarithmic gradient ($f (x)'/f(x))$ is uniformly bounded by a constant $C = C(\sigma_h / \sigma) > 0.$

Thus, for $f(x)/f(t)$, where $x \in V(t)$, we have by the Intermediate value theorem 
$$\log [ f(x)/f(t) ] = \log f(x) - \log f(t) \geq - C d(x, t) \geq -C \diam(V)$$
and thus
$f(x) \geq \exp(-C \diam(V) ) f(t),$
where $\diam(V):=\max\{||x-y||~:~x,y\in V(t)\}$.

Note that $C$ and $\diam(V)$ above are uniform over all $t  \in \Lambda$ and $x \in V(t)$.
Hence,
$$\int_{x \in V(t)} f(x) dx
\geq \exp(-C \diam(V) ) \vol(\Lambda) f(t)$$
for all $t \in \Lambda$, and finally
$$\sum_{t \in \Lambda} \vol(\Lambda) f(t)
\leq \exp(C \diam(V) ) \int_{x \in \R^n } f(x) dx. $$

Clearly, the function $f$ is integrable on $\R^n$ and $f(0)=1$. Furthermore
$$\int_{\R^n}f(x)dx= \prod_{i=1}^{n}2\left[\frac{1}{\sqrt{1/t_i^2+\frac{\sigma_h^2}{\sigma^2}}}\right]_{0}^{+\infty}=\left(\frac{2\sigma}{\sigma_\mathbf{h}}\right)^n<\infty.$$
Hence, the convergence of the series in \eqref{eq: def of psi series}.

In fact, the average of $\psi_{\Lambda_e}\left(\frac{\sigma_\mathbf{h}}{\sigma}\right)$ over the set of determinant one lattices is $1+\left( \frac{\sigma_\mathbf{h}}{2 \sigma } \right)^n$. 
To prove this fact we use a celebrated theorem by Siegel given below.

We first identify the space of all $n$-dimensional (volume one) lattices with $$X_n:=\SL_n(\R)/\SL_n(\Z).$$ It is well known that $X_n$ has a unique (Haar) measure $\mu_n$.
\begin{thm}[Siegel's mean value theorem]
Suppose that $n\geq 2$. Let $f:\R^n\rightarrow \R$ be an integrable function. Let $\Lambda\in X_n$. Then
$$\int_{X_n}\sum_{t\in\Lambda}f(t)d\mu_n = \int_{\R^n}f(x)dx+f(0).$$
\end{thm}

Now, using the  above notation, we can conclude that the mean of $\psi_{\Lambda_e}\left(\frac{\sigma_\mathbf{h}}{\sigma}\right)$ over $X_n$ is $1+\left( \frac{\sigma_\mathbf{h}}{2 \sigma } \right)^n <\infty$. 
}

\subsection{Proof of Theorem \ref{thm:inf_bound_mod_lambda}}

\begin{proof}
By the independence assumptions, we have the identity
\begin{align}
    I\left[\mathbf{m}; (\mathbf{y}/\Lambda_{s, \mathbf{h}} , \mathbf{h})\right] = \E_\mathbf{h} \left[ I\left[\mathbf{m} ; (\mathbf{y}/\Lambda_{s, \mathbf{h}}  | \mathbf{h})\right]\right].
\end{align}
We divide $\mathbf{y}$ into components $\mathbf{y}_{\perp}$ and $\mathbf{y}_{\parallel}$, perpendicular and parallel to the nested lattices $\Lambda_{*, \mathbf{h}}$. Given $\mathbf{h}$, $\mathbf{y}_{\perp}$ consists only of the perpendicular noise component $\mathbf{n}_\perp$ and is hence independent of both $\mathbf{m}$ and $\mathbf{y}_{\parallel}$. Thus, $$I\left[\mathbf{m} ; (\mathbf{y}/\Lambda_{s, \mathbf{h}}  | \mathbf{h})\right] = I\left[\mathbf{m} ; (\mathbf{y}_{\parallel}/\Lambda_{s, \mathbf{h}}  | \mathbf{h})\right].$$
Next, given also the message $\mathbf{m}$, the variational distance of $\mathbf{y}_{\parallel}/\Lambda_{s, \mathbf{h}}$ and the uniform distribution on $\mathcal{V}(\Lambda_{s, \mathbf{h} }) $ can be bounded as in \cite{Viterbo-flatness}: the respective PDFs are
\begin{equation}
\rho_{ \{ \mathbf{y}_{\parallel}/\Lambda_{s, \mathbf{h}} | \mathbf{m} \} } (\mathbf{y}) = \frac{1}{[\Lambda_e : \Lambda_s]} g_n (\Lambda_{e, \mathbf{h}} - \lambda_m + \mathbf{y}, \sigma ),
\end{equation}
where $n = \dim (\mathrm{span}(\Lambda_*))$ is the rank of the nested lattices and $\mathbf{y} \in \mathcal{V}(\Lambda_{s, \mathbf{h} }) $, and
\begin{equation}
\rho_{ \mathrm{Unif} } (\mathbf{y}) = \frac{1}{[\Lambda_e : \Lambda_s]} \frac{1}{\vol (\Lambda_{e, \mathbf{h}})}.
\end{equation}
The definition of the flatness factor now directly implies
\begin{equation}
V (\rho_{ \{ \mathbf{y}_{\parallel}/\Lambda_{s, \mathbf{h}} | M = m \} }, \rho_{ \mathrm{Unif} }) \le \varepsilon_{\Lambda_{e, \mathbf{h}}} (\sigma).
\end{equation}
For simplicity, we denote $\varepsilon = \varepsilon_{\Lambda_{e, \mathbf{h}}} (\sigma)$ for the rest of this proof. For $\varepsilon \le 1/2$, Lemma~\ref{lemma: ff and information} now yields an information leakage bound $h(\varepsilon, | \mathcal{M} | )$, and otherwise we have the trivial upper bound $\log | \mathcal{M} |$. Hence,
\begin{align}
I\left[M; (\mathbf{y}/\Lambda_{s, \mathbf{h}} , \mathbf{h})\right] 
&= \E_{\mathbf{h}} [ I\left[M ; (\mathbf{y}_{\parallel}/\Lambda_{s, \mathbf{h}}  | \mathbf{h})\right] ] \\
&\le \E_\mathbf{h}\left[\mathbbm{1}_{\left\{\varepsilon \le 1 / 2\right\}} h(\varepsilon, | \mathcal{M} | ) \right] + \E_{\mathbf{h}}\left[\mathbbm{1}_{\left\{\varepsilon > 1/2\right\}} \log | \mathcal{M} | \right] \\
 \label{eq: two term info bd}
&= \mathbb{P}_{\mathbf{h}} \left[ \varepsilon \le 1/ 2 \right] \E_{\left\{\mathbf{h} | \varepsilon \le 1 / 2\right\}}\left[ h(\varepsilon, | \mathcal{M} | ) \right] + \mathbb{P}_{\mathbf{h}} \left[ \varepsilon >1 / 2 \right]  \log | \mathcal{M} |.
\end{align}

For the first term in \eqref{eq: two term info bd}, we apply Jensen's inequality to the convex function $h$ in $\varepsilon$,
\begin{align}
\E_{\left\{\mathbf{h} | \varepsilon \le 1 / 2\right\}}\left[ h(\varepsilon, | \mathcal{M} | )  \right] 
&\le h \left( \E_{\left\{ \mathbf{h} | \varepsilon \le 1 / 2\right\}}  \left[ \varepsilon \right], | \mathcal{M} | \right)  \\
   &\le h \left( \min \{   \E_{ \mathbf{h}}\left[\varepsilon \right] , 1/2 \}, | \mathcal{M} | \right).
\end{align}
The second inequality holds since $0 \le \E_{\left\{\mathbf{h} | \varepsilon \le 1 / 2\right\}}\left[\varepsilon\right] \le  \min \{   \E_{ \mathbf{h}}\left[\varepsilon \right] , 1/2 \} \le 1/2$, and $h$ is increasing on $[0, 1/2]$.

Next, write \eqref{eq: two term info bd} as a convex combination of two numbers,
\begin{align}
    I  \left[M; (\mathbf{y}/\Lambda_{s, \mathbf{h}} , \mathbf{h})\right] 
 \label{eq: two term info bd'}
&\le (1 - \mathbb{P}_{\mathbf{h}} \left[ \varepsilon >1 / 2 \right] ) h \left(  \min \{   \E_{ \mathbf{h}}\left[\varepsilon \right] , 1/2 \}, | \mathcal{M} | \right) + \mathbb{P}_{\mathbf{h}} \left[ \varepsilon >1 / 2 \right]  \log | \mathcal{M} |.
\end{align}
In the interval $[0, 1/2]$, we have $h(\cdot, | \mathcal{M} |) \le \log | \mathcal{M} |$, so the latter number in the convex combination \eqref{eq: two term info bd'} is the larger one. We can bound its weight using Markov's inequality to obtain $\mathbb{P}_{\mathbf{h}} \left[ \varepsilon >1 / 2 \right] \le 2 \E_{\mathbf{h}}\left[\varepsilon\right].$  Thus, we have
\begin{align}
I  \left[M; (\mathbf{y}/\Lambda_{s, \mathbf{h}} , \mathbf{h})\right] 
& \le (1- 2 \E_{\mathbf{h}}\left[\varepsilon\right]) h \left(  \min \{   \E_{ \mathbf{h}}\left[\varepsilon \right] , 1/2 \}, | \mathcal{M} | \right) + 2 \E_{\mathbf{h}}\left[\varepsilon\right] \log | \mathcal{M} | \\
& = \begin{cases} (1- 2 \E_{\mathbf{h}}\left[\varepsilon\right]) h \left( \E_{ \mathbf{h}}\left[\varepsilon \right], | \mathcal{M} | \right) + 2 \E_{\mathbf{h}}\left[\varepsilon\right] \log | \mathcal{M} |, \qquad \E_{ \mathbf{h}}\left[\varepsilon \right] \le 1/2 \\
\log | \mathcal{M} |, \qquad \E_{ \mathbf{h}}\left[\varepsilon \right] \ge 1/2. \end{cases}
\end{align}
The theorem follows.
\end{proof}

\subsection{Proof of Lemma \ref{lem:discrete_continuous_gaussian} }

\begin{proof}
We start with a technical modification of $\rho (\mathbf{y})$. By construction, $\mathbf{y} = \mathbf{h} \mathbf{x} + \mathbf{n}$ has the PDF
\begin{align}
 \rho (\mathbf{y}) &= \sum_{\mathbf{x} \in \Lambda_e + \boldsymbol{\lambda}_M } P(\mathbf{x} = \mathbf{x}) \rho_{\mathbf{y} | \mathbf{x} = \mathbf{x}}(\mathbf{y} ) \\
&= \frac{1}{g_n(\Lambda_e + \boldsymbol{\lambda}_M ; \sigma_s)} \sum_{\mathbf{x} \in \Lambda_e + \boldsymbol{\lambda}_M } g_n(\mathbf{x}; \sigma_s) g_m(\mathbf{y} -  \mathbf{h} \mathbf{x}; \sigma ) \\
\label{eq: PDF modification}
&=  \frac{1}{g_m (\Lambda_e + \boldsymbol{\lambda}_M ; \sigma_s)} \frac{1}{\sqrt{2 \pi \sigma_s}^m \sqrt{2 \pi \sigma}^m } \sum_{\mathbf{x} \in \Lambda_e + \boldsymbol{\lambda}_M } \exp \left[ -\frac{1}{2 \sigma^2 \sigma_s^2 } \left( \sigma^2 | \mathbf{x} |^2 + \sigma^2 | \mathbf{y} - \mathbf{hx} |^2 \right) \right].
\end{align}
Let us expand separately the norms in the exponential:
\begin{equation}
\sigma^2 | \mathbf{x} |^2 + \sigma_s^2 | \mathbf{y} - \mathbf{hx} |^2 
= \sigma^2 \mathbf{x}^t \mathbf{x} +  \sigma_s^2 \mathbf{x}^t \mathbf{h}^t \mathbf{h}  \mathbf{x}  - \sigma_s^2 (\mathbf{y}^t \mathbf{h x} + \mathbf{x}^t \mathbf{h}^t \mathbf{y} ) + \sigma_s^2 | \mathbf{y}  |^2
\end{equation}
Notice that $(\sigma^2 I_{n} + \sigma_s^2 \mathbf{h}^t \mathbf{h})$ is a positive definite symmetric matrix. Let $\mathbf{Q} \in \R^{n \times n}$ be its square-root matrix, $(\sigma^2 I_{n} + \sigma_s^2 \mathbf{h}^t \mathbf{h}) = \mathbf{Q}^t \mathbf{Q}$. Note that $\mathbf{Q}$ is invertible since $\ker (\mathbf{Q}^t \mathbf{Q}) = \{ \mathbf{0} \}$. 
A straightforward calculation yields
\begin{align}\label{eq: matrix computation 1}
\sigma^2 | \mathbf{x} |^2 + \sigma_s^2 | \mathbf{y} - \mathbf{hx} |^2 
&= | \mathbf{Q x} - \sigma_s^2 \mathbf{Q}^{-t} \mathbf{h}^{t}  \mathbf{y}  |^2 + \sigma_s^2  \mathbf{y}^t (I_{m}- \sigma_s^2 \mathbf{hQ}^{-1} \mathbf{Q}^{-t} \mathbf{h}^{t}  ) \mathbf{y} \\
&=| \mathbf{Q x} - \sigma_s^2  \mathbf{Q}^{-t} \mathbf{h}^{t}  \mathbf{y}  |^2 + \sigma_s^2  \sigma^2 \mathbf{y}^t (\sigma^2 I_{m} + \sigma_s^2 \mathbf{h h }^t )^{-1} \mathbf{y}.
\end{align}
We substitute this back into \eqref{eq: PDF modification} to obtain
\begin{align}
 \rho (\mathbf{y}) 
&= \frac{1}{g_m (\Lambda_e + \boldsymbol{\lambda}_M ; \sigma_s)} \frac{1}{\sqrt{2 \pi \sigma_s}^m \sqrt{2 \pi \sigma}^m } \exp ( - \frac{1}{2}  \mathbf{y}^t  (\sigma^2 I_{m} + \sigma_s^2 \mathbf{h h}^t )^{-1} \mathbf{y} ) \\ 
&\qquad \times\sum_{\mathbf{x} \in \Lambda_e + \boldsymbol{\lambda}_M } \exp \left[ -\frac{1}{2 \sigma^2 \sigma_s^2 } | \mathbf{Q x} -   \sigma_s^2  \mathbf{Q}^{-t} \mathbf{h}^{t}  \mathbf{y}  |^2  \right] \\
&=  \frac{1}{g_m (\Lambda_e + \boldsymbol{\lambda}_M ; \sigma_s)} \frac{ \sqrt { \det (\sigma^2 I_{m} + \sigma_s^2 \mathbf{hh}^t ) } }{( \sqrt{2 \pi }  \sigma_s \sigma )^m } \tilde{\rho} (\mathbf{y}) \sum_{\mathbf{x} \in \Lambda_e + \boldsymbol{\lambda}_M } \exp \left[ -\frac{1}{2 \sigma^2 \sigma_s^2 } | \mathbf{Q x} -  \sigma_s^2  \mathbf{Q}^{-t} \mathbf{h}^{t}  \mathbf{y} |^2  \right] 
\end{align}
Let $\mathbf{h} = U D V$ be the singular value decomposition of $\mathbf{h}$, where $U \in \R^{m \times m}$ and $V\in \R^{n \times n}$ are orthonormal and $D\in \R^{m \times n}$ is a (possibly nonsquare) diagonal matrix with diagonal entries $d_1, \ldots, d_n$. Then, we have
\begin{equation}
\det(\mathbf{Q}^t \mathbf{Q}) = \det(V^t(\sigma^2 I_{n} + \sigma_s^2 D^t D)V)  = \prod_{i=1}^n( \sigma^2 + \sigma_s^2 d_i^2 ),
\end{equation}
and similarly
\begin{equation}
\det (\sigma^2 I_{m} + \sigma_s^2 \mathbf{hh}^t ) = \det(U^t (\sigma^2 I_{m} + \sigma_s^2 DD^t )  U) = \sigma^{2(m-n)} \det(\mathbf{Q}^t \mathbf{Q}).
\end{equation}
Hence,
\begin{align}
\rho (\mathbf{y}) &= \tilde{\rho} (\mathbf{y})  \frac{ \sqrt { \det (\sigma^2 I_{m} + \sigma_s^2 \mathbf{hh}^t ) } \vol (\Lambda_e) }{\vol (\Lambda_e) g_m (\Lambda_e + \boldsymbol{\lambda}_M ; \sigma_s)}   \sum_{\mathbf{t} \in  Q \Lambda_e + \mathbf{u} } g_m (\mathbf{t}; \sigma_s \sigma) \\
\label{eq: y PDF}
&= \tilde{\rho} (\mathbf{y})  \frac{ \vol (\frac{1}{\sigma} \mathbf{Q}  \Lambda_e)  g_n (\frac{1}{\sigma}  \mathbf{Q}  \Lambda_e + \frac{1}{\sigma}  \mathbf{u}; \sigma_s ) }{\vol (\Lambda_e) g_n (\Lambda_e + \boldsymbol{\lambda}_M ; \sigma_s)}
\end{align}
where $\mathbf{u}$ is a suitable vector. This form of the PDF $\rho (\mathbf{y})$ allows us to bound the variational distance to $\tilde{\rho} (\mathbf{y}) $.


Let us study the latter factor in \eqref{eq: y PDF}.  Since the flatness factor is rotationally invariant, we may write everything in terms of an eigenbasis of $\mathbf{h}^t \mathbf{h}$. Since $\mathbf{h}^t \mathbf{h}$ is symmetric and positive semi-definite, the basis is orthonormal and in this basis $\mathbf{h}^t \mathbf{h}= \diag (h_i^2)$, and $\mathbf{Q} = \frac{1}{\sigma} \sqrt{ \sigma^2 I_{n} + \sigma_s^2 \mathbf{h}^t \mathbf{h} } = \diag(\sqrt{ 1+ \sigma_s^2 h_i^2 / \sigma^2 } )$. We have
\begin{align}
\left| \vol \left(\frac{1}{\sigma} \mathbf{Q}  \Lambda_e\right)  g_n \left(\frac{1}{\sigma}  \mathbf{Q}  \Lambda_e + \frac{1}{\sigma}  \mathbf{u}; \sigma_s \right) - 1 \right| &\le 
 \varepsilon_{ \diag(\tilde{h}) \Lambda_e  } (\sigma_s) 
 \\
&= \Theta_{(\diag(\tilde{h}) \Lambda_e)^*} \left(e^{- 2 \pi  \sigma_s^2} \right) - 1 \\
&= \sum_{\mathbf{t} \in \Lambda_e^*} \exp \left( -  2 \pi \sigma^2 \sigma_s^2 \sum_{i=1}^n \frac{t_i^2}{h_i^2 \sigma_s^2 + \sigma^2}\right) -1 \\
\label{eq: first theta series}
&= \sum_{\mathbf{t} \in \Lambda_e^*} \exp \left( -  2 \pi \sum_{i=1}^n \frac{t_i^2}{h_i^2/ \sigma^2 + 1/ \sigma_s^2}\right) - 1.
\end{align}
Similarly, the denominator of the latter factor in \eqref{eq: y PDF} satisfies
\begin{equation}\label{eq: second theta series}
| \vol( \Lambda_e ) g_n( \Lambda_e  ; \sigma_s  ) - 1 | 
 \le  \varepsilon_{ \diag(\tilde{h}) \Lambda_e  } (\sigma_s ) = \sum_{\mathbf{t} \in \Lambda_e^*} \exp \left( -  2 \pi \sum_{i=1}^n \frac{t_i^2}{1/ \sigma_s^2}\right) - 1.
\end{equation}
From expressions \eqref{eq: first theta series} and \eqref{eq: second theta series} it is also clear that $\varepsilon_{ \diag(\tilde{h}) \Lambda_e  } (\sigma_s \sigma) \ge \varepsilon_{  \Lambda_e  } (\sigma_s ) $. Hence, the latter factor in \eqref{eq: y PDF} is between $\frac{1- \varepsilon}{1 + \varepsilon}$ and $\frac{1 + \varepsilon}{1 - \varepsilon}$, where $$\varepsilon = \varepsilon_{ \frac{1}{\sigma} Q \Lambda_e } (\sigma_s ) =\varepsilon_{ \frac{1}{\sigma_s} Q \Lambda_e  } (\sigma) = \varepsilon_{\sqrt{\sigma^2/\sigma_s^2 I_{n} +  \mathbf{h}^t\mathbf{h}}\Lambda_e}(\sigma),$$using the scaling property. Consequently,  the deviation of the latter factor in \eqref{eq: y PDF} from $1$ is at most
\begin{equation}
\frac{1 + \varepsilon}{1 - \varepsilon} - 1= \frac{2\varepsilon}{1 - \varepsilon} \le \frac{ 2 \varepsilon}{1- \varepsilon_{\mathrm{max} } },
\end{equation}
where we used the assumption $\varepsilon \le \varepsilon_{\mathrm{max} }$. Thus,
\begin{equation}
| \rho(\mathbf{y}) - \tilde{\rho} (\mathbf{y}) | \le \frac{ 2 \varepsilon}{1- \varepsilon_{\mathrm{max} } } \tilde{\rho} (\mathbf{y}) 
\end{equation}
and integrating over $\R^n$ we get the proposed statistical distance.
\end{proof}

\subsection{Proof of Theorem \ref{thm: info bd}}
\begin{proof}
The proof closely follows the steps of that of Theorem~\ref{thm:inf_bound_mod_lambda}. We start by writing
\begin{align}
    I\left[M; (\mathbf{y} , \mathbf{h})\right] = \E_\mathbf{h} \left[ I\left[M ; (\mathbf{y} | \mathbf{h}=\mathbf{h})\right] \right].
\end{align}
For a fixed channel realization $\mathbf{h}$, by Lemma~\ref{lem:discrete_continuous_gaussian} the distribution of the received vector $\mathbf{y}$ is close to a fixed Gaussian distribution $\tilde{\rho}$ for all messages $M$, with variational distance
\begin{align}
    V\left(\rho_{\left\{\mathbf{y}| M=m\right\}},\tilde{\rho}\right) \le 2 \varepsilon_{\sqrt{\sigma^2/\sigma_s^2 I_{n} +  \mathbf{h}^t \mathbf{h}} \Lambda_e}(\sigma) / (1 - \varepsilon_{\mathrm{max} }),
\end{align}
provided that $\varepsilon_{\sqrt{\sigma^2/\sigma_s^2 I_{n} +  \mathbf{h}^t \mathbf{h}} \Lambda_e}(\sigma) =: \varepsilon \le \varepsilon_{\mathrm{max} }$ for some fixed $ \varepsilon_{\mathrm{max} } <1$.
Using Lemma~\ref{lemma: ff and information}, we get an information leakage bound for channel matrices $\mathbf{h}$ such that $2 \varepsilon / (1 - \varepsilon_{\mathrm{max} }) \le 1/2 $
\begin{equation}
I\left[M ; (\mathbf{y} | \mathbf{h}=\mathbf{h})\right]  \le h \left( \frac{2 \varepsilon }{1 - \varepsilon_{\mathrm{max} }}, | \mathcal{M} | \right).
\end{equation}
This bound is applicable for a largest range of values of $\varepsilon$ if we have $2 \varepsilon_{\mathrm{max} } / (1 - \varepsilon_{\mathrm{max} }) = 1/2$, so we choose $\varepsilon_{\mathrm{max}} = 1/5$. Otherwise, we have the trivial information leakage bound $\log | \mathcal{M} |$. This yields
\begin{align}
\label{eq: two-term bound}
   I \left[M; (\mathbf{y} , \mathbf{h})\right] & \le 
   \E_{ \mathbf{h} }  \left[  \mathbbm{1}_{ \left\{\varepsilon \le 1/5 \right\} }
    h \left( 5 \varepsilon / 2, | \mathcal{M} | \right)  \right] 
    +\E_{\mathbf{h}}\left[\mathbbm{1}_{\left\{ \varepsilon > 1/5  \right\}} \log | \mathcal{M} | \right].
\end{align}
The rest of the proof is identical to Theorem~\ref{thm:inf_bound_mod_lambda}: we use Jensen's inequality for the first term, 
\begin{align}
\E_{ \mathbf{h} }   \left[ \mathbbm{1}_{ \left\{\varepsilon \le 1/5 \right\} }
    h \left( 5 \varepsilon / 2, | \mathcal{M} | \right)  \right] 
    &= \mathbb{P} [ \varepsilon \le 1/5 ] \ \E_{ \mathbf{h} | \varepsilon \le 1/5 }
    \left[ h \left( 5 \varepsilon / 2, | \mathcal{M} | \right)  \right] \\
    & \le \mathbb{P} [ \varepsilon \le 1/5 ] \  h \left( 5 \min \{ \E_{\mathbf{h}} [\varepsilon ], 1/5 \} / 2, | \mathcal{M} | \right)
\end{align}
and increase the weight of the latter larger term in the convex combination \eqref{eq: two-term bound} by Markov's inequality,
\begin{align}
   I \left[M; (\mathbf{y} , \mathbf{h})\right] & \le 
    (1- 5 \E_{\mathbf{h}} [\varepsilon ]) h \left( 5 \min \{ \E_{\mathbf{h}} [\varepsilon ], 1/5 \} / 2, | \mathcal{M} | \right) + 5 \E_{\mathbf{h}} [\varepsilon ] \log | \mathcal{M} | \\
    & = \begin{cases} (1- 5 \E_{\mathbf{h}}\left[\varepsilon\right]) h \left(5 \E_{ \mathbf{h}}\left[ \varepsilon \right]/ 2, | \mathcal{M} | \right) + 5 \E_{\mathbf{h}}\left[\varepsilon\right] \log | \mathcal{M} |, \qquad \E_{ \mathbf{h}}\left[\varepsilon \right] \le 1/5 \\
\log | \mathcal{M} |, \qquad \E_{ \mathbf{h}}\left[\varepsilon \right] \ge 1/5. \end{cases}
\end{align}
which concludes the proof.
\end{proof}

\end{document}